\DeclareMathOperator*{\argmin}{argmin}
\newcommand{\Conv}{%
  \mathop{\scalebox{1.5}{\raisebox{-0.2ex}{$\circledast$}}
  }
}
\newtheorem*{theorem}{Theorem}
\newtheorem*{lemma}{Lemma}
\title{Cauchy-Gaussian Overbound for Heavy-tailed GNSS Measurement Errors}
\author{ \href{https://orcid.org/0009-0007-7755-8714}{\includegraphics[scale=0.06]{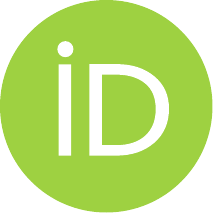}\hspace{1mm}Zhengdao Li} \\
	Department of Aeronautical and Aviation Engineering\\
	The Hong Kong Polytechnic University\\
	Hong Kong, China \\
	\texttt{zhengda0.li@connect.polyu.hk} \\
	\And
	\href{https://orcid.org/0000-0003-0994-9258}{\includegraphics[scale=0.06]{orcid.pdf}\hspace{1mm}Penggao Yan}\thanks{Correspondence author} \\
	Department of Aeronautical and Aviation Engineering\\
	The Hong Kong Polytechnic University\\
	Hong Kong, China \\
	\texttt{peng-gao.yan@connect.polyu.hk} \\
	\And
	\href{https://orcid.org/0000-0003-4158-0913}{\includegraphics[scale=0.06]{orcid.pdf}\hspace{1mm}Weisong Wen} \\
	Department of Aeronautical and Aviation Engineering\\
	The Hong Kong Polytechnic University\\
	Hong Kong, China \\
	\texttt{welson.wen@polyu.edu.hk} \\
	\And
	\href{https://orcid.org/0000-0002-0352-741X}{\includegraphics[scale=0.06]{orcid.pdf}\hspace{1mm}Li-Ta Hsu} \\
	Department of Aeronautical and Aviation Engineering\\
	The Hong Kong Polytechnic University\\
	Hong Kong, China \\
	\texttt{lt.hsu@polyu.edu.hk} \\
}
\begin{document}
\maketitle

\begin{abstract}
	Overbounds of heavy-tailed measurement errors are essential to meet stringent navigation requirements in integrity monitoring applications. This paper proposes to leverage the bounding sharpness of the Cauchy distribution in the core and the Gaussian distribution in the tails to tightly bound heavy-tailed GNSS measurement errors. We develop a procedure to determine the overbounding parameters for both symmetric unimodal (s.u.) and not symmetric unimodal (n.s.u.) heavy-tailed errors and prove that the overbounding property is preserved through convolution. The experiment results on both simulated and real-world datasets reveal that our method can sharply bound heavy-tailed errors at both core and tail regions. In the position domain, the proposed method reduces the average vertical protection level by 15\% for s.u. heavy-tailed errors compared to the single-CDF Gaussian overbound, and by 21\% to 47\% for n.s.u. heavy-tailed errors compared to the Navigation Discrete ENvelope and two-step Gaussian overbounds.
\end{abstract}

\keywords{Overbounding techniques \and heavy-tailed distributions \and Cauchy distribution \and global navigation satellite systems}

\vspace{0.5\baselineskip}

\section{Introduction} \label{sec: Introduction}
The reliability and safety of positioning solutions have become a critical concern for the {global navigation satellite systems (GNSS)}. Correspondingly, safety-of-life systems are established, such as satellite-based augmentation systems (SBAS), ground-based augmentation systems (GBAS), and receiver autonomous integrity monitoring (RAIM) \citep{unitedstates.dept.oftransportation_1999_2000, federalaviationadministration_gbas_2010, brown_baseline_1992, walter_weighted_1995}. These systems are designed to facilitate highly robust and precise GNSS positioning solutions, and maintain the integrity risk within the probability of hazardously misleading information events ($P_{HMI}$). As an alternative to integrity risk evaluation, the protection level (PL) is derived in the position domain to measure the maximum tolerable positioning error boundary under $P_{HMI}$ \citep{blanch_gaussian_2018, elsayed_bounding_2023, antic_sbas_2023}. 

To ensure the integrity of navigation systems, {it is crucial to precisely characterize the measurement error, as the stochastic properties of the measurement error will be projected into the position domain \citep{lee_sigma_2009}}. However, the limited bandwidth channels of GNSS augmentation systems hinder the transmission of complicated error profiles to users \citep{blanch_gaussian_2018}. Therefore, a simpler and more conservative error boundary (namely, overbound) is commonly employed in navigation systems \citep{blanch_gaussian_2018, rife_overbounding_2004, blanch_error_2005,
rife_paired_2006, gao_error_2022}, which compromises between accurate error modelling and the complexity of broadcasting larger amount of parameters. Overbounds represent the worst error distribution without hardware failures and are designed to guarantee that the integrity risk remains below an acceptable level \citep{decleene_defining_2000, rife_core_2004, antic_sbas_2023,xia_integrityconstrained_2024}. A sharp (i.e., closely matching the original error distribution) overbound helps reduce PL and facilitates the development of high-availability integrity monitoring algorithms. Moreover, the overbound should also be explicitly parameterized to be conveniently broadcast to users \citep{blanch_gaussian_2018, blanch_baseline_2015}.

Gaussian overbounding techniques have been a common approach for characterizing GNSS error distributions. The basic single cumulative density function (CDF) Gaussian overbound \citep{decleene_defining_2000} assumes strictly symmetric unimodal (s.u.) errors, a limitation addressed by paired Gaussian overbound \citep{rife_paired_2006} and more advanced methods like two-step Gaussian overbound \citep{blanch_gaussian_2018}. The classic Gaussian-based methods established the framework for CDF overbound, as detailed in Section \ref{sec: classic gaussian ob}. However, a persistent challenge is the conservative performance of Gaussian overbounds when applied to the error distributions with heavy-tailed properties \citep{blanch_gaussian_2018, yan_principal_2024, huang_robust_2016}, which are common in ephemeris/clock errors and multipath errors \citep{heng_statistical_2011,foss_introduction_2013, zhu_gnss_2018, karaim_gnss_2018,yan2025high}. {It has been verified that typical heavy-tailed GNSS error distributions include large errors (beyond 2-3 standard deviations) that occur with a higher-than-Gaussian frequency, even though the core region may still be well-represented by a Gaussian shape \citep{rife_overbounding_2012}. When visualized as a probability density function (PDF), a heavy-tailed distribution exhibits tails that decay more slowly than a Gaussian, resulting in a sharper peak and prolonged tails. Generally, the ``heavy-tailedness" of a distribution is often measured by its kurtosis \citep{feldmann_fitting_1998, balanda_kurtosis_1988}. Regarding the properties of heavy-tailed errors, Gaussian overbounds employ a large sigma ($\sigma$). While this may successfully bound the tails, it often creates a significant separation between the overbound and the empirical data in the core region, introducing unnecessary conservatism.}

Non-Gaussian overbounding methods have been developed to address the challenges posed by heavy-tailed errors. \citet{rife_core_2004} proposed the core overbounding method, which uses a Gaussian core Gaussian sidelobes (GCGS) distribution to provide less conservative bounding on the core and tail regions separately. \citet{blanch_position_2008} proposed a {zero-mean} bimodal Gaussian mixture model (BGMM) based method to add a heavy-tailed Gaussian component in constructing the overbounding distribution. \citet{rife_overbounding_2012} proposed the Navigation Discrete ENvelope (NavDEN), a symmetric and discrete overbound with a tight Gaussian core and flared tails designed to model heavy-tailed distributions. {However, the low grid resolution of the coarse discrete models may limit the bounding performance.} \citet{xue_upper_2017} adopted the stable distribution for bounding GBAS ranging errors and validated the feasibility using simulated error samples. \citet{larson_gaussianpareto_2019} developed the Gaussian-Pareto overbound, which applies Extreme Value Theory to generate a sharp tail bound. {However, the Gaussian-Pareto overbound does not inherently satisfy the s.u. requirement defined by \citet{decleene_defining_2000}. Consequently, the overbound cannot be analytically proven to preserve overbounding properties through convolution, leaving its performance in the position domain unverified.} Recently, \citet{yan_principal_2024} proposed the Principal Gaussian overbound (PGO) to tightly bound original errors in both the core and tail regions by strategically inflating and shifting the Gaussian components of a fitted BGMM for the error distribution. Although the PGO demonstrates promising bounding performance for heavy-tailed errors, the PGO is developed based on the zero-mean assumption on the error distribution, which is not usually satisfied in real-world applications.



In this work, we aim to develop a systematic method to bound heavy-tailed error distributions, including both s.u. profiles with a zero bias and n.s.u. profiles with a non-zero bias. The core overbounding idea \citep{rife_core_2004} is used to separately bound the core and tail regions of the empirical errors distribution. Unlike conventional non-Gaussian overbounding methods, which employ non-Gaussian distributions for tail bounding and Gaussian distributions for core bounding, we propose leveraging the intrinsically heavy-tailed properties of the Cauchy distribution to sharply bound the core region of empirical errors. This Cauchy core is then transitioned to the Gaussian distribution to tightly bound the tails of the empirical errors. The Cauchy distribution (Lorentzian in physics), was first applied to describe spectral lines due to homogeneous broadening \citep{cauchy_exercices_1840, born_principles_2013}. The naturally heavy-tailed property and the explicit forms of PDF and CDF have made Cauchy applied in various areas, including catastrophe predictions in computational finance \citep{mahdizadeh_goodnessoffit_2019}, rainfall probability model in Hydrology \citep{kassem_identification_2021}, and loss function in machine learning \citep{liu_cauchy_2022}. We found that the heavy-tailed Cauchy distribution possesses significantly more prolonged tail regions and thus a sharper core region than the Gaussian distribution. This property allows the Cauchy-based overbounds to closely adhere to the core of empirical error distributions with heavy tails while diverging from the tail. In contrast, the Gaussian distribution allocates a considerably higher proportion of cumulative mass at its core, and this mass exponentially decreases when the magnitude of error gradually rises, resulting in short tail regions. Hence, the Gaussian overbound can sharply bound the tails of empirical errors but leave large separations away from the core. {Additionally, following the core overbounding framework \citep{rife_core_2004}, we seek to balance between bound sharpness and operational simplicity by separately generating bounds at core and tail regions.} Therefore, naturally, we have the motivation to combine the strengths of the Cauchy distribution in the core region with those of the Gaussian distribution in the tail regions to bound heavy-tailed error distributions. This idea constitutes the main content of our work. 

This paper proposes the Cauchy-Gaussian overbound for both s.u. and n.s.u. error distributions, where an explicit framework is developed to determine the optimal parameters of these overbounds. Inspired by the Gaussian overbounding methods \citep{decleene_defining_2000, rife_paired_2006, blanch_gaussian_2018}, Cauchy-Gaussian overbound is designed to function as a single-CDF bound for s.u. errors and as paired bounds for n.s.u. errors. We benchmark the bounding performance of the proposed method with single-CDF Gaussian overbound \citep{decleene_defining_2000} for s.u. errors using a simulated dataset, and with the two-step Gaussian overbound \citep{blanch_gaussian_2018} for n.s.u. errors using a real-world dataset, respectively. Results show that the proposed Cauchy-Gaussian overbound can more tightly bound the heavy-tailed errors, at both core and tail regions. {In the position domain, the proposed method can reduce the vertical protection level (VPL) by roughly 15\% on average for s.u. errors. For n.s.u. errors, the improvement is greater, with average VPL reductions of 21\% and 47\% compared to the NavDEN and two-step Gaussian overbounds, respectively.}

The contributions of this work lie in three aspects:
\begin{enumerate}
    \item Propose the Cauchy-Gaussian CDF overbound for both s.u. and n.s.u. heavy-tailed error distributions. A procedure to determine the overbound parameters is developed; 
    \item Prove that the overbounding properties of the proposed overbound is preserved through convolution, which provides theoretical support for using Cauchy-Gaussian CDF overbound in integrity applications; 
    \item Validate the bounding performance of the proposed method in both range and position domains for both s.u. and n.s.u. error distributions, using simulated and real datasets. 
\end{enumerate}

The remaining parts of this paper are organized as follows: Section \ref{sec: classic gaussian ob} reviews three classic Gaussian overbounding methods. Section \ref{sec: proposed ovb} develops the Cauchy-Gaussian overbound for both s.u. and n.s.u. errors. Section \ref{sec: experiments and evaluations} compares the bounding performance of Cauchy-Gaussian overbound, single-CDF Gaussian overbound, and two-step Gaussian overbound in both range and position domains through numerical simulations and real datasets validations. The impacts of heavy-tailedness of the empirical errors with non-zero bias on bounding performance are investigated in Section \ref{sec: analyze heavy-tailedness}. Finally, Section \ref{sec: conclusion and outlook} concludes this study and proposes prospective research directions.

\section{Review on Gaussian Overbounding Methods} \label{sec: classic gaussian ob}
The Gaussian distribution has been widely applied in overbounding methods, due to its simple parameterization and property of preserving overbounding nature through convolution \citep{rife_paired_2006, yan_principal_2024,larson_gaussianpareto_2019}. In this section, three classic Gaussian overbounding methods are reviewed, including (1) single-CDF Gaussian overbound \citep{decleene_defining_2000}; (2) paired Gaussian overbound \citep{rife_paired_2006}; and (3) two-step Gaussian overbound \citep{blanch_gaussian_2018}. The paired Gaussian overbound has inspired the proposal of the Cauchy-Gaussian overbound in this paper, while the Gaussian single-CDF overbound and the two-step Gaussian overbound will serve as the baseline in evaluating the bounding performance in Section \ref{sec: experiments and evaluations}. For the sake of notation, $F$ indicates CDF and $f$ indicates PDF in the remainder of this paper.

\subsection{Single-CDF Gaussian overbound} \label{sec: sym cdf ovb}
\citet{decleene_defining_2000} introduced the first successful CDF overbound on s.u. distributions. {Specifically, the CDF overbound for s.u. errors is given by:
\begin{align}
    F_{ob}(x)   \geq F_e(x)  \quad\forall F_e\leq \frac{1}{2}, \label{equ: sym cdf x<0}\\
    F_{ob}(x)  \leq F_e(x) \quad\forall F_e > \frac{1}{2}, \label{equ: sym cdf x>0}
\end{align}
where $F_{ob}$ and $F_e$ denote the CDFs of the overbounding distribution and the empirical error distribution, and both distributions are assumed to have a zero mean. When $F_{ob}$ is represented by the CDF of a Gaussian distribution}, the tightest single-CDF Gaussian overbound can be determined by finding the minimum scale parameter that satisfies the above inequalities. Decleene proved that if both empirical errors and overbound are s.u. distributions, the projection of the range-domain overbound to the position domain also bounds the positioning errors, which mathematically means the overbounding properties are preserved through convolution \citep{decleene_defining_2000}. Despite laying the foundation of CDF overbound, Decleene's method strictly relies on s.u. error and overbounding distributions and is inadequate for real-world GNSS measurements that commonly exhibit asymmetry, multi-modality, and biases \citep{blanch_gaussian_2018, rife_paired_2006}.

\subsection{Paired Gaussian overbound}
\label{sec: paired ovb}
The paired overbounding method expands the application scope of Decleene's method from bounding s.u. errors to random error profiles, by introducing a pair of bounds \citep{rife_paired_2006}. Specifically, the left and right (indicated respectively with subscript $L$ and $R$) Gaussian overbounds preservable through convolution are given by:
\begin{align}
    F_L(x) = F_G(x; \mu_L,\sigma_L)  &\geq F_e(x)\quad\forall x, \label{equ: paired L} \\
    F_R(x) = F_G(x; \mu_R,\sigma_R)  &\leq F_e(x)\quad\forall x, \label{equ: paired R}
\end{align}
where $F_G$ represents Gaussian CDF, $\mu$ and $\sigma$ denote the location and scale parameters \footnote{To maintain generality across both Gaussian and Cauchy distributions, we will use the terms ``location parameter" and ``scale parameter" for the remainder of the paper.} (i.e., mean and standard deviation) of the Gaussian distributions, and $\mu_L=-\mu_R$. Following the definition in Equation \eqref{equ: sym cdf x<0} and \eqref{equ: sym cdf x>0}, an analog single-CDF overbound $F_{ob}$ based on the paired overbound can be constructed by:
\begin{equation}
    F_{ob}(x)= \left\{
        \begin{array}{lr} 
          F_{L}\left(x \right)        &  \forall F_{L} < \frac{1}{2} \\
         \frac{1}{2}                             &  \text{otherwise} \\
          F_{R}\left(x \right)        &  \forall F_{R}  > \frac{1}{2}
        \end{array}.
    \right. \label{equ: single-cdf analog}
\end{equation}
Notably, the formed $F_{ob}$ is only used for visualization purposes and will not be used in calculating the projection of the range-domain overbound to the position domain. Overbound using paired distributions effectively tackles non-symmetric or non-unimodal empirical errors. However, finding the paired bounds using Gaussian distribution typically results in large location parameters ($\mu_L$ and $\mu_R$), which unavoidably increase the conservatism in bounding the measurement error and eventually raises the PL in the position domain.

\subsection{Two-step Gaussian overbound}
\label{sec: twostep gaussian ovb}
Based on the paired overbounding method, \citet{blanch_gaussian_2018} proposed a state-of-the-art two-step Gaussian overbound to further tighten the overbounding distribution towards the empirical profile. This is achieved through an ad hoc s.u. distribution $f_{su}(t;b_{su})$, with a non-zero location parameter $b_{su}$. The $f_{su}$ performs as an intermediate paired left and right CDF bounds, according to the inequalities in Equation \eqref{equ: paired L} and \eqref{equ: paired R},
\begin{align}
    \int_{-\infty}^{x} f_{su_L}(t; b_{su_L})dt \geq F_e(x)\quad\forall x  , \label{equ: 2step1 L}\\
    \int_{-\infty}^{x} f_{su_R}(t; b_{su_R})dt \leq F_e(x)\quad\forall x  . \label{equ: 2step1 R}
\end{align}
Given that both $f_{su}$ and the Gaussian distribution are s.u., the second step aims to find the minimum scale parameters $\sigma_L$ and $\sigma_R$ on the left and right regions to satisfy:
\begin{align}
    F_G(x; b_{su_L}, \sigma_L)  &\geq   \int_{-\infty}^{x} f_{su_L}(t; b_{su_L})dt\quad\forall x \leq b_{su_L}, \label{equ: 2step2 L}  \\
     F_G(x; b_{su_R}, \sigma_R)  &\leq   \int_{-\infty}^{x} f_{su_R}(t; b_{su_R})dt\quad\forall x \geq b_{su_R}.  \label{equ: 2step2 R}
\end{align}
Finally, the left and right Gaussian CDF bounds are obtained by:
\begin{align}
    F_L(x) &=  F_G(x; -b_{f}, \sigma_f)\quad\forall x \leq -b_{f}, \\
      F_R(x) &=  F_G(x; b_{f}, \sigma_f)\quad\forall x \geq b_{f},
\end{align}
where $b_{f}=\max\left( |b_{su_L}|, |b_{su_R}|\right)$ and $\sigma_f=\max\left( \sigma_L, \sigma_R\right)$. Besides, the two-step Gaussian overbounding method maintains the overbounding properties through convolution, given that the inequalities in Equation \eqref{equ: 2step1 L} to \eqref{equ: 2step2 R} are satisfied. Compared to the conventional paired Gaussian overbound in Section \ref{sec: paired ovb}, the two-step Gaussian overbounding method yields a significantly smaller location parameter $b_{f}$ than the paired Gaussian overbound, while not increasing the number of parameters (i.e., location and scale) in the overbounding distributions to be broadcast. However, the involvement of the intermediate $f_{su}$ can still make the error bounds fairly conservative. For heavy-tailed and biased empirical errors, the two-step Gaussian overbound would still result in either a large $b_f$ or a large $\sigma_f$, which increases the PL in the position domain to maintain integrity. 

\section{Cauchy-Gaussian Overbound}
\label{sec: proposed ovb}
The intrinsically heavy-tailed properties of the Cauchy distribution can result in significantly prolonged tail regions and a sharp core region, making the Cauchy distribution a fit to bound the core of empirical errors. In contrast, the Gaussian distribution possesses light tails and can potentially bound errors more tightly at tail regions. Based on the properties, this section introduces a three-step procedure to construct the Cauchy-Gaussian overbound for symmetric unimodal (s.u.) and not symmetric unimodal (n.s.u.) error distributions separately. As shown in the workflow of Figure \ref{fig: flowchart su}, the optimal single-CDF overbounds using zero-mean Cauchy and Gaussian distributions, respectively, are first determined in Step 1 for the empirical s.u. errors without bias. {The \enquote{optimality} particularly refers to the case where the minimum separation between the overbound and the empirical errors is reached, given a specified overbounding framework (e.g., single-CDF Cauchy overbound or single-CDF Gaussian overbound in this context). This specific definition will be used consistently in the subsequent analysis.} Processed by tangential transitions in Step 2, the two formed single-CDF overbounds are integrated in Step 3 to generate the single-CDF Cauchy-Gaussian overbound. In Figure \ref{fig: flowchart nsu},  the original n.s.u. error distribution is optimally bounded with paired Cauchy-Gaussian Combined Model (CGCM) overbound in Step 1 and paired Gaussian overbound in Step 2. Finally, in Step 3, the right half of the synthesized paired Cauchy-Gaussian overbound is a point-wise supremum of the right-half paired CGCM overbound and right-half paired Gaussian overbound, while the left half of the proposed overbound features a point-wise infimum of the left half of the two prior overbounds. Notably, the $\Phi^{-1}(\cdot)$ transforms CDFs to the equivalent standard normal quantiles, thus Gaussian-based segments in overbounding distributions perform as straight lines. We start with a short introduction to the Cauchy distribution.
\begin{figure}[ht]
        \centering
        \begin{subfigure}{0.49\textwidth} 
            \centering
            \includegraphics[width=0.85\textwidth]{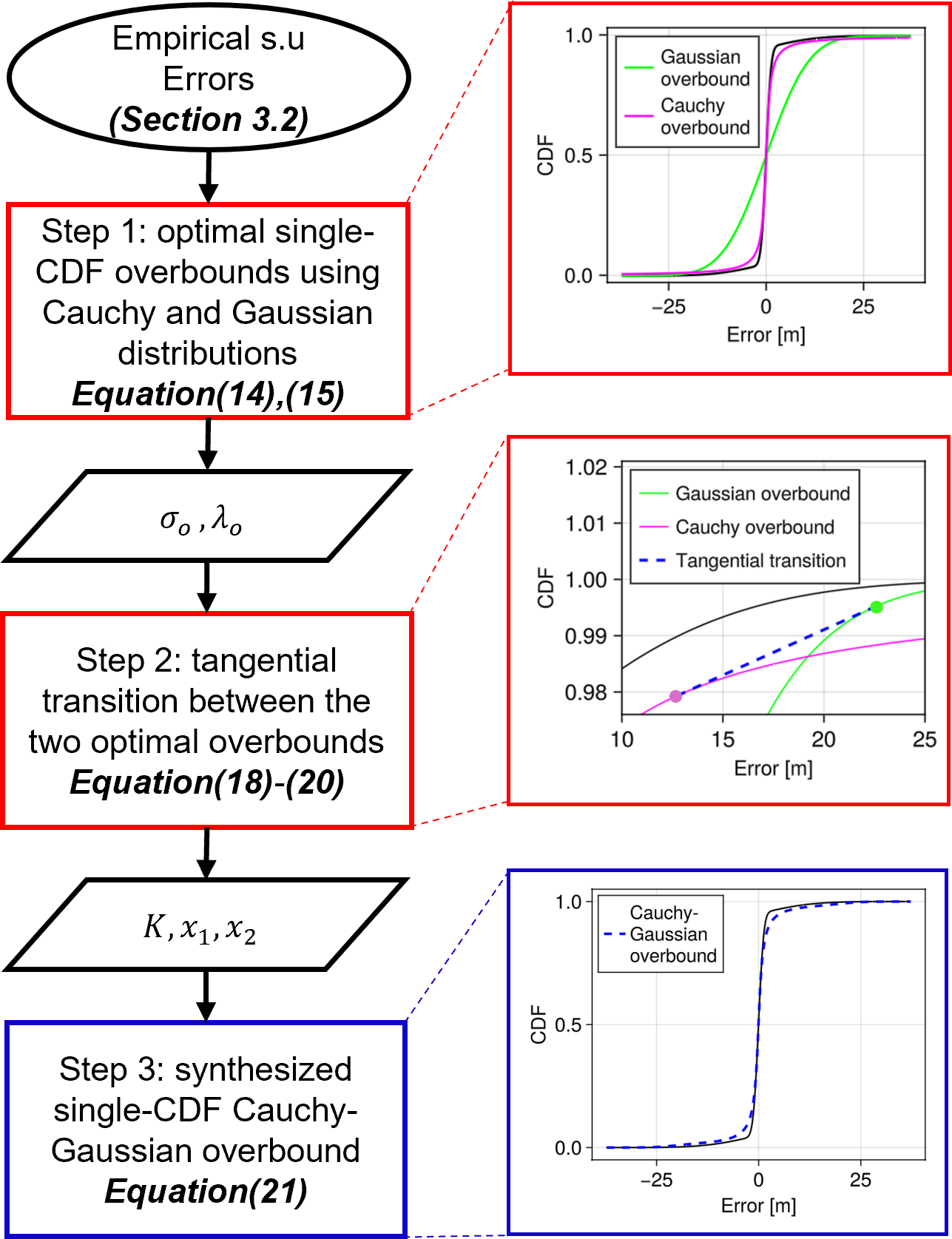}
            \caption{}
            \label{fig: flowchart su}
        \end{subfigure}
        \hfill 
        \begin{subfigure}{0.49\textwidth} 
            \centering
            \includegraphics[width=0.95\textwidth]{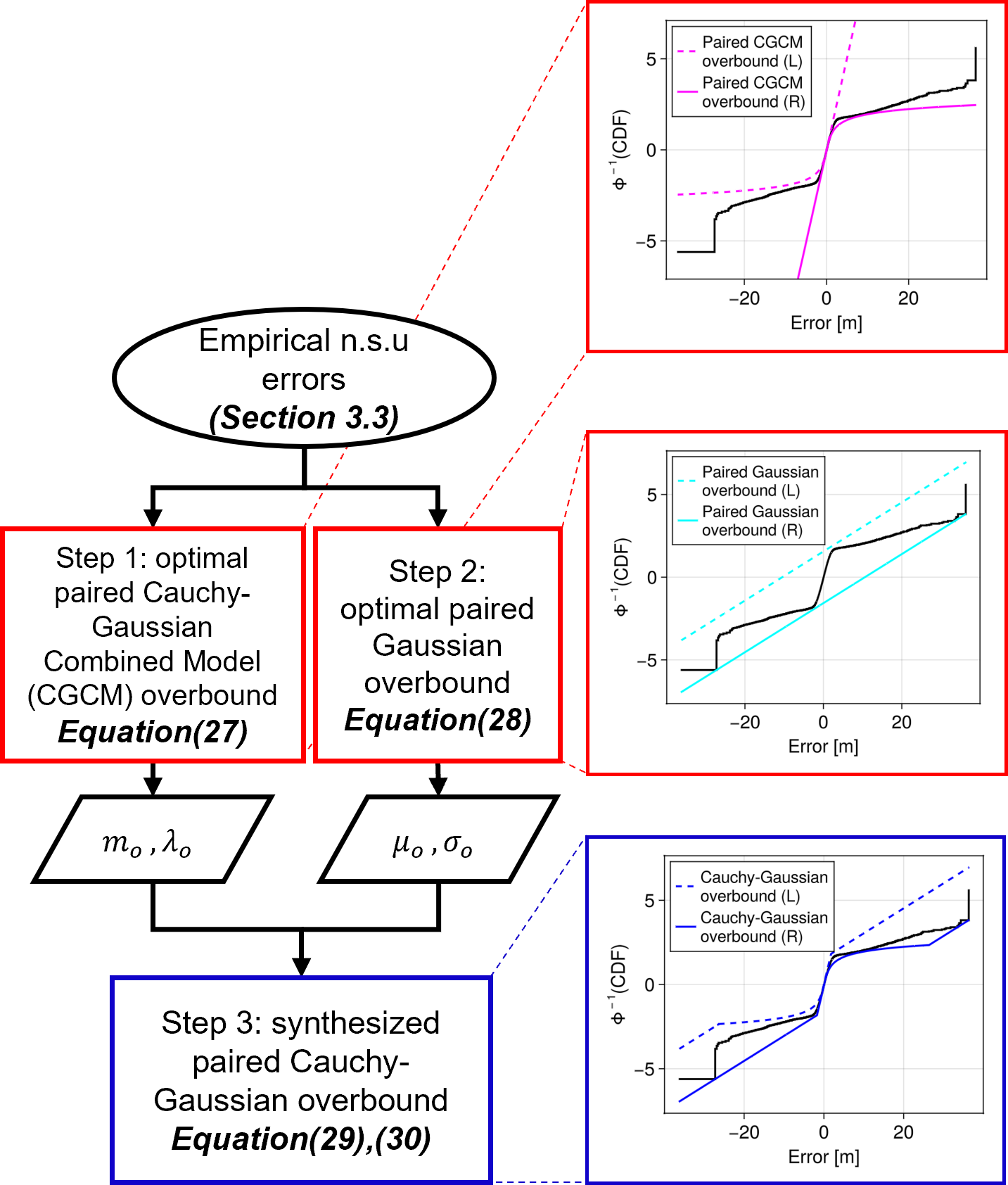}
            \caption{}
            \label{fig: flowchart nsu}
        \end{subfigure}
        \caption{Flowcharts of the Cauchy-Gaussian overbound for (a) s.u. and (b) n.s.u. error distributions. {The black curve in each subfigure represents the empirical error distribution. For better visualization at tail regions, the subfigures in (b) show the CDF values transformed into their equivalent standard normal quantiles.}}
        \label{fig: flowchart}
\end{figure}
\subsection{Cauchy distribution}
\label{sec: cauchy distribution}
Similar to the Gaussian distribution, the PDF of a Cauchy distribution ($C(m,\lambda)$) is a symmetric \enquote{bell curve}. The Cauchy distribution has explicit forms of both PDF ($f_C$) and CDF ($F_C$) expressed with a location parameter $m$ and scale parameter $\lambda$ as follows: 
\begin{align}
    f_C(x; m, \lambda) &=\frac{1}{\lambda \pi} \cdot \frac{1}{1+(\frac{x-m}{\lambda})^2}, \\
    F_C(x; m, \lambda) &=\frac{1}{2}+\frac{1}{\pi}\arctan \left( \frac{x-m}{\lambda} \right). 
\end{align}
\begin{figure}[t]
        \centering
        \begin{subfigure}{0.49\textwidth} 
            \includegraphics[width=0.95\textwidth]{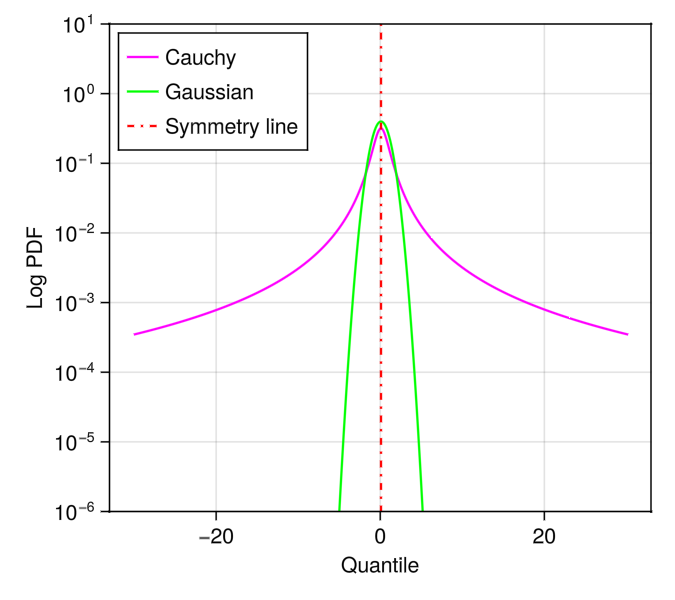}
            \caption{\centering}
            \label{fig: logpdf example}
        \end{subfigure}
        \hfill
        \begin{subfigure}{0.49\textwidth} 
            \includegraphics[width=0.95\textwidth]{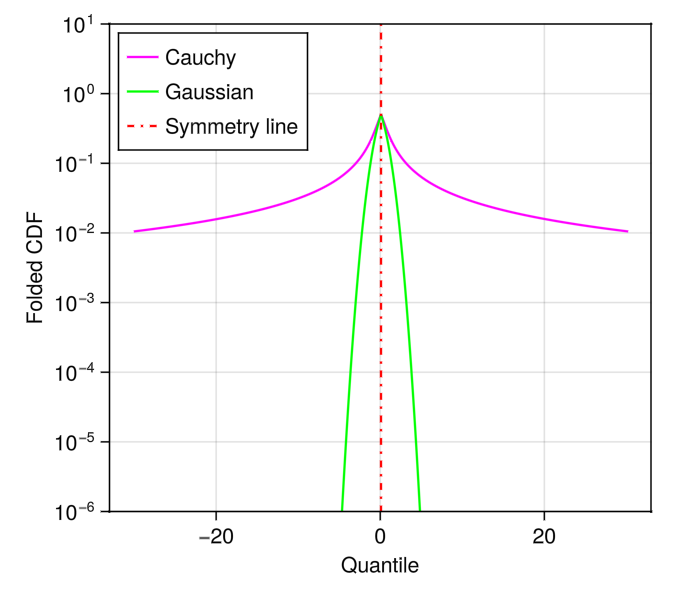}
            \caption{\centering}
            \label{fig: foldedcdf example}
        \end{subfigure}
        \caption{{Comparison between the standard Cauchy and Gaussian distributions, through (a) PDF and (b) folded CDF on a logarithmic scale.}}
        \label{fig: demo C G PDF comparisons}
\end{figure}
The differences between Cauchy and Gaussian distribution are explicitly reflected in their PDFs \citep{menon_characterization_1962, smelser_international_2001}, where the Gaussian PDF decreases at an exponential rate with the increase of error magnitude and the Cauchy PDF decreases at a polynomial rate. Figure \ref{fig: logpdf example} shows the logarithmic-scale (Log) PDF of the standard Cauchy and Gaussian distributions, with the location parameter being 0 and the scale parameter being 1 for both distributions. The inset plot illustrates that the standard Cauchy distribution has a sharper peak at the core region (near the symmetry line) than the standard Gaussian distribution. Moreover, the standard Cauchy distribution exhibits considerably longer and fatter tails than the standard Gaussian distribution. Correspondingly, Figure \ref{fig: foldedcdf example} shows the folded CDFs of the standard Cauchy and Gaussian distributions. The folded CDF combines the left half of CDF and right half of complementary CDF (CCDF), i.e., 1-CDF, in one view. As can be seen, the Cauchy distribution allocates a higher proportion of accumulated mass at tail regions than the Gaussian distribution does.\par

The extremely heavy-tailed properties of the Cauchy distribution place an advantage in terms of bounding heavy-tailed empirical errors. Specifically, the Cauchy distribution decentralizes a higher percentage of probability mass to the tails, which makes it possible to bound heavy-tailed empirical errors without over-inflating the scale parameter. Besides, the Cauchy distribution features a significantly sharper core region than the Gaussian. This property allows the Cauchy overbound to closely adhere to the core of empirical error distributions. 

Nevertheless, the extremely heavy-tailed properties of the Cauchy distribution become even more pronounced after multiple rounds of convolution. Therefore, the resultant positioning error distribution from the convolution of multiple Cauchy-characterized ranging error sources is expected to exhibit significantly heavy tails, which are usually associated with a large PL in the position domain. In contrast, the Gaussian distribution decreases exponentially in PDF, guaranteeing a small probability mass in the tail region of itself and its convolutions. To achieve tighter bounding in both the core and tail regions in the range domain and thereby reduce the PL in the position domain, it is necessary to combine the strengths of the Cauchy distribution in the core region and the Gaussian distribution in the tail regions. This is the central concept behind the proposed Cauchy-Gaussian overbound for heavy-tailed distributions, as illustrated below.

\subsection{Procedures of bounding s.u. errors}
\label{sec: method su errors}
As mentioned in Section \ref{sec: sym cdf ovb}, for a zero-mean s.u. empirical errors, its single-CDF overbound falls on finding the s.u. distribution (i.e., Cauchy or Gaussian distribution in this study) with the minimum scale parameter. Based on the idea of utilizing the advantages of both Cauchy and Gaussian distributions, we propose the Cauchy-Gaussian overbound with a three-step procedure as follows:

\textbf{Step 1: Overbound with optimal Cauchy and Gaussian distributions} \\
Single-CDF overbound on s.u. errors, standing to inequalities in Equation \eqref{equ: sym cdf x<0} and \eqref{equ: sym cdf x>0}, is conducted using both Gaussian ($G(0,\sigma)$) and Cauchy ($C(0,\lambda)$) distributions, respectively. Denote the optimal scale parameter of the two distributions as $\sigma_o$ and $\lambda_o$, respectively. For the single-CDF Gaussian overbound, $\sigma_o$ can be found by:
\begin{subequations}
\begin{align}
        &\sigma_o = \min   \quad  \sigma\\
s.t.\quad \quad  & F_{G}(x;0,\sigma) \geq F_e(x)\quad \forall x \leq 0 \\
         & F_{G}(x;0,\sigma) \leq F_e(x)\quad \forall x > 0. 
\end{align}
\label{opt1: getting optimal gaussian}
\end{subequations}
For the Cauchy distribution, we prove in Appendix \ref{app 1: cauchy ovb on a known gaussian} that the heavy-tailed Cauchy distribution can well overbound a centrally-aligned Gaussian distribution by $\lambda \geq \sqrt{\frac{2}{\pi}}\sigma$. This indicates that $\sigma_o$ determines the upper bound of $\lambda_o$, which decreases the computational load. Specifically, the objective of finding the optimal Cauchy distribution gives:
\begin{subequations}
\begin{align}
        &\lambda_o = \min   \quad \lambda \\
s.t. \quad\quad & F_C(x;0,\lambda) \geq  F_e(x)\quad \forall x \leq 0 \\
         & F_C(x;0,\lambda) \leq F_e(x)\quad \forall x > 0 \\
          & \lambda \leq \sqrt{\frac{2}{\pi}}\sigma_o.
\end{align}
\label{opt2: getting optimal cauchy}
\end{subequations}
Inspired by the work of \citet{blanch_position_2008},  {we demonstrate an instance of heavy-tailed empirical errors by a zero-mean BGMM with the following settings}:
\begin{equation}
    f_e(x)=0.9f_G(x;0,1)+0.1f_G(x;0,10) .
    \label{eq:example_error_type_I}
\end{equation}
The black curve in Figure \ref{fig: tentative combined overbound} shows this example error profile. The optimal Cauchy and Gaussian overbounds for this empirical error distribution are shown by magenta and green curves, respectively. As can be seen, the Gaussian overbound deviates more significantly from the empirical errors in the core region while aligning more closely with the empirical errors in the tail regions, compared to the Cauchy overbound. This observation further confirms our anticipation that Cauchy bounds more sharply at the core region and Gaussian yields tighter bounding at tails. Therefore, it is reasonable to combine the strengths of the two models to achieve tighter bounding.

\textbf{Step 2: Tangential transition to preserve overbounding properties through convolution} \\
Using the optimal Gaussian and Cauchy overbounds, we may combine the Cauchy core and Gaussian tails and tentatively define the combined CDF overbound as:
\begin{equation}
    F_{combined}(x)= \left\{
        \begin{array}{lr} 
          \min\left(F_C(x;0,\lambda_o), F_G(x;0,\sigma_o)\right)        &  \forall  x \leq 0 \\
          \max\left(F_C(x;0,\lambda_o),F_G(x;0,\sigma_o)\right)               &  \forall  x > 0
        \end{array}.
    \right.
    \label{equ: f_combined}
\end{equation}
The CDF of the combined distribution has been plotted as the blue dashed curve in Figure \ref{fig: xt cdf}. As can be seen, the Cauchy overbound intersects with the Gaussian overbound on both left and right tail regions. The abscissa of the intersections are denoted as $x_{t,L}$ and $x_{t,R}$, respectively. The zoom-in view in Figure \ref{fig: xt cdf} also illustrates that $F_{combined}$ transits from the Cauchy overbound directly to the Gaussian overbound at $x_{t,R}$ in the right region. A similar transition from the Gaussian overbound to the Cauchy overbound exists in the left region, which is omitted for the present. The transitions destroy the monotonicity of the PDF for each half of the combined distribution, as shown in the zoom-in view of Figure \ref{fig: xt pdf}. A sudden jump from Cauchy PDF to Gaussian PDF occurs at $x_{t,R}$. This phenomenon makes the combined distribution violate the s.u. property and do not preserve overbounding properties through convolutions\citep{decleene_defining_2000}. To solve this problem,  a tangential transition from Cauchy to Gaussian is constructed.

\begin{figure}[t]
        \centering
        \begin{subfigure}{0.49\textwidth} 
            \includegraphics[width=0.95\textwidth]{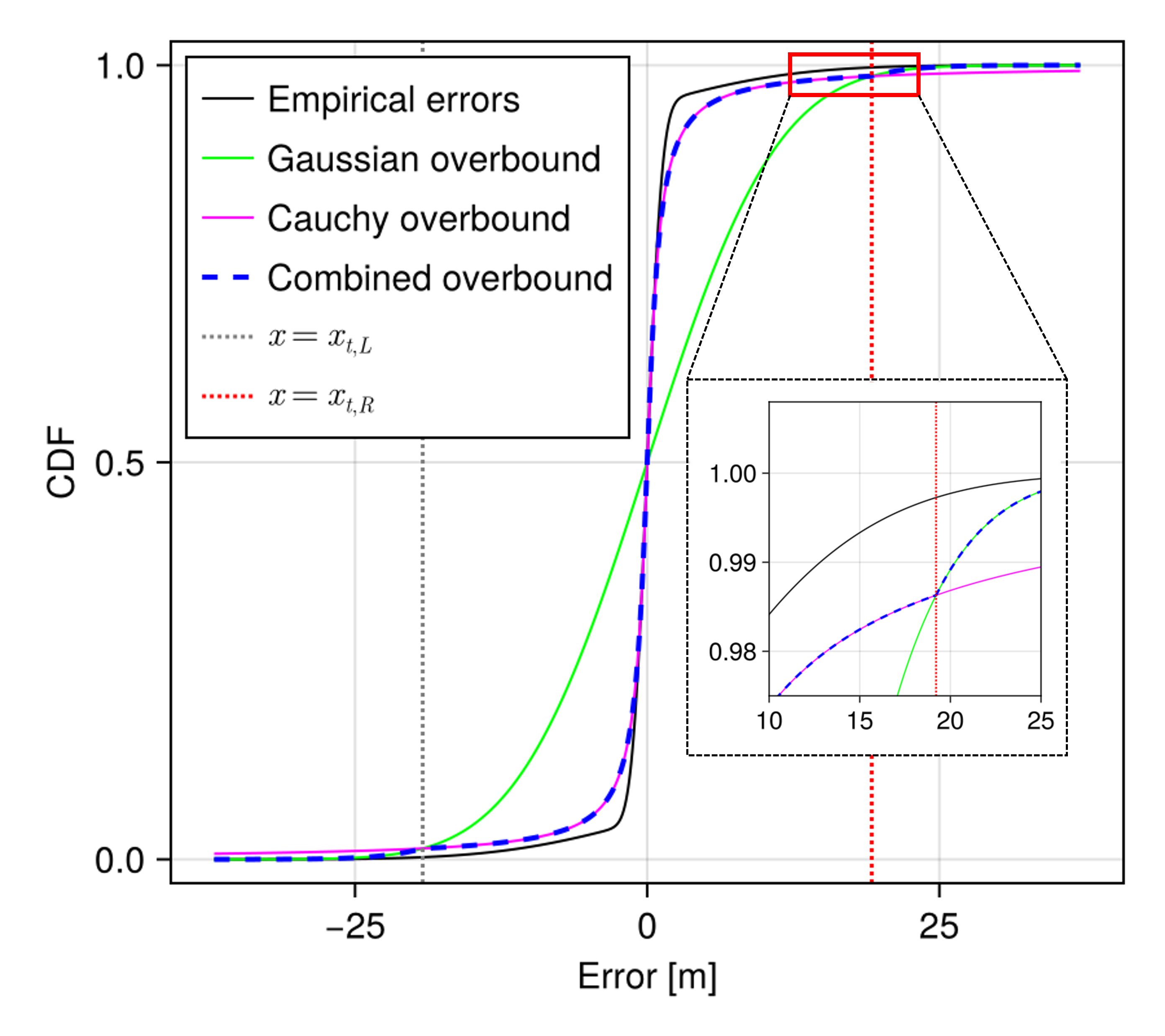}
            \caption{\centering}
            \label{fig: xt cdf}
        \end{subfigure}
        \hfill 
        \begin{subfigure}{0.49\textwidth} 
            \includegraphics[width=0.95\textwidth]{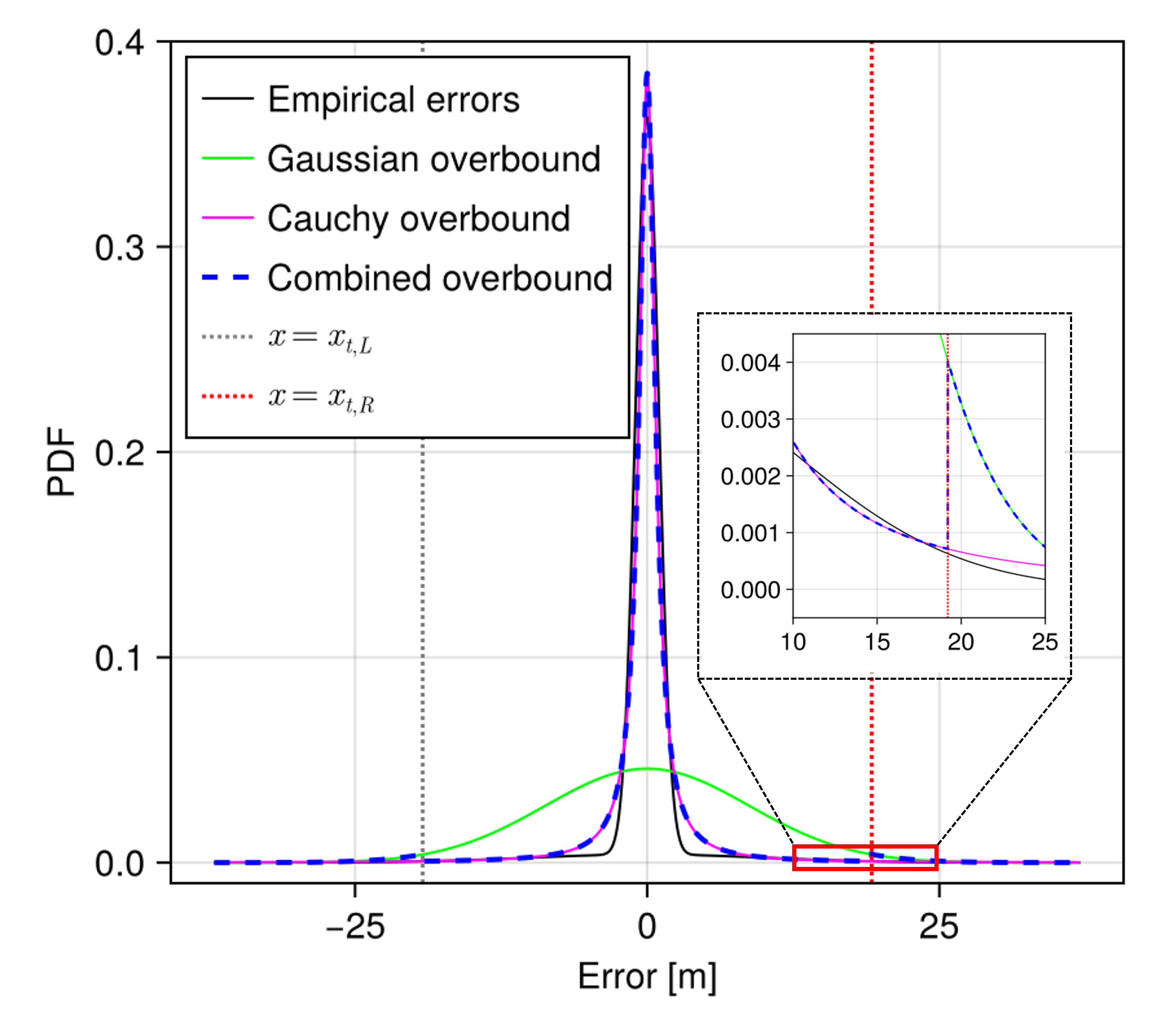}
            \caption{\centering}
            \label{fig: xt pdf}
        \end{subfigure}
        \caption{The tentative combined overbounding distribution for zero-mean BGMM error defined in Equation \eqref{eq:example_error_type_I} (s.u. profile) in two views: (a) CDF; (b) PDF. The Gaussian and Cauchy overbounds are also plotted for reference in each subfigure.}
        \label{fig: tentative combined overbound}
\end{figure}

\begin{figure}[t]
        \centering
        \begin{subfigure}{0.49\textwidth} 
            \includegraphics[width=0.95\textwidth]{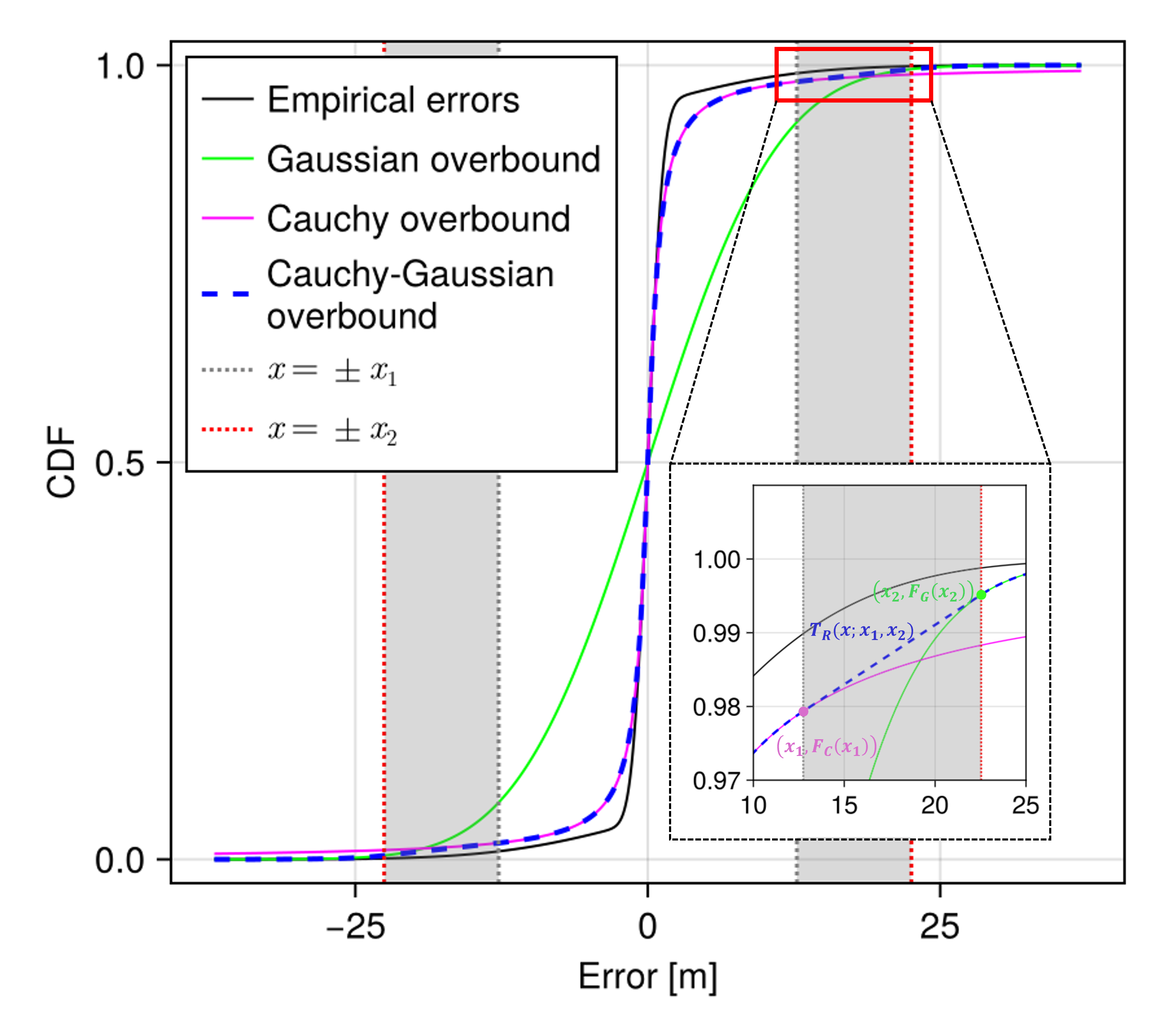}
            \caption{\centering}
            \label{fig: sym ovb cdf}
        \end{subfigure}
        \hfill 
        \begin{subfigure}{0.49\textwidth} 
            \includegraphics[width=0.95\textwidth]{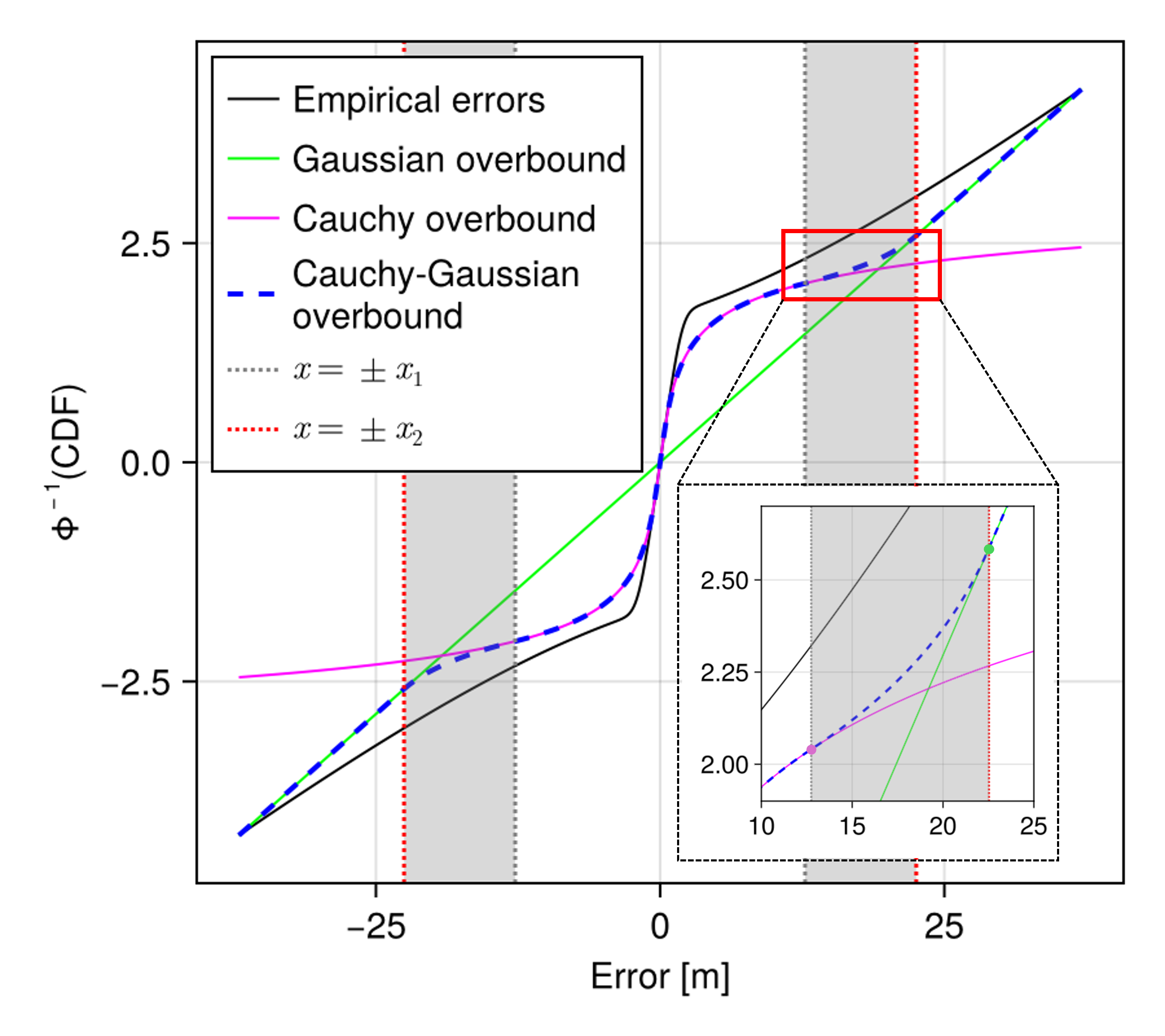}
            \caption{\centering}
            \label{fig: sym ovb normcdf}
        \end{subfigure}
        \\
        \begin{subfigure}{0.95\textwidth} 
            \includegraphics[width=0.95\textwidth]{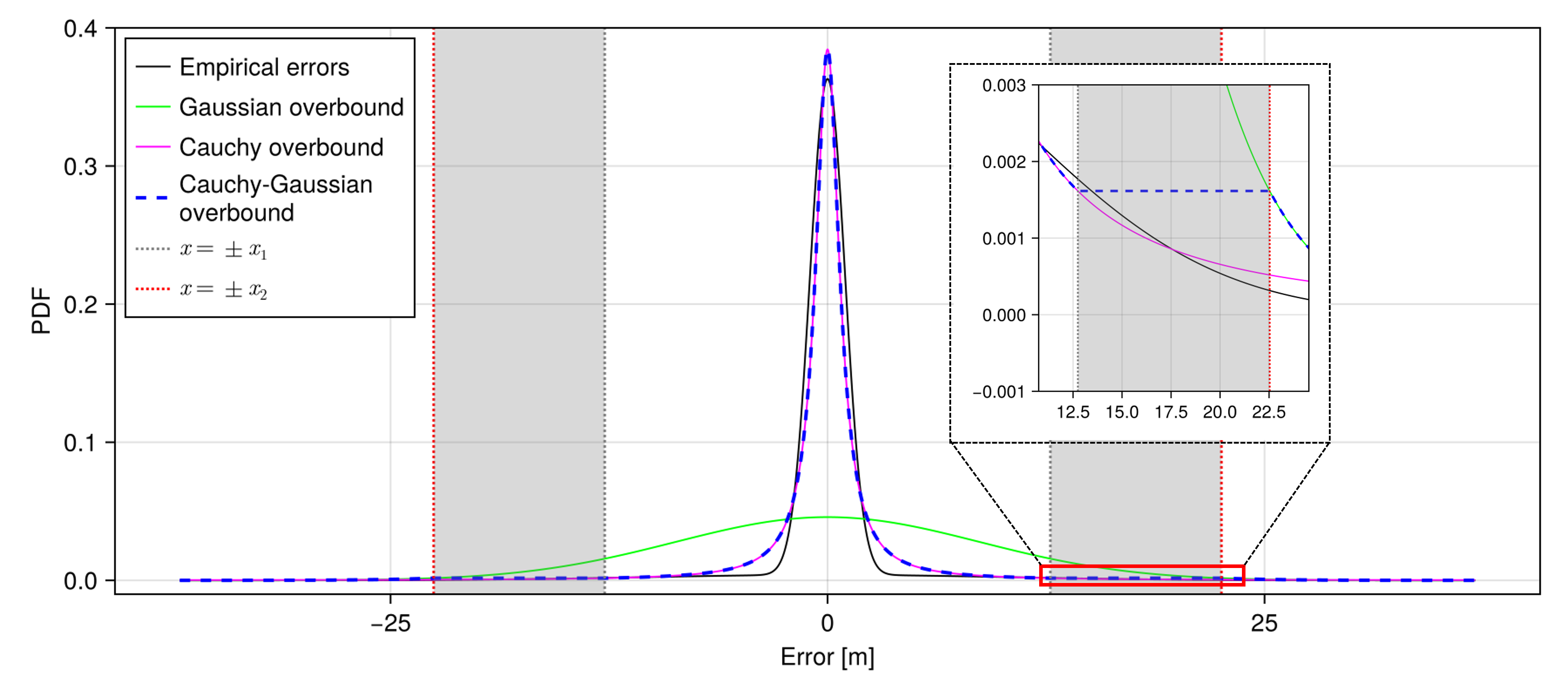}
            \caption{\centering}
            \label{fig: sym ovb pdf}
        \end{subfigure}
        \caption{The Cauchy-Gaussian overbounding results for zero-mean BGMM error defined in Equation \eqref{eq:example_error_type_I} (s.u. profile). Three views are exhibited, including (a) CDF; (b) quantile-scale CDF; (c) PDF. The gray-shaded rectangles denote the two tangential transition regions on the left and right. The Gaussian and Cauchy overbounds are plotted for reference in all the subfigures.}
        \label{fig: demo su}
\end{figure}

The symmetry of the zero-mean Cauchy and Gaussian distributions ensures that PDF of $F_{combined}$ is inherently symmetric, as indicated by Figure \ref{fig: xt pdf}. To further maintain the unimodality of the combined distribution, tangential line segments $T_L(x)$ and $T_R(x)$ are constructed as a transition from the Cauchy core to Gaussian tails. {Appendix \ref{app 1.5: tangent existence} demonstrates that a tangential line always exists between the Cauchy and Gaussian overbounds against heavy-tailed errors.} With the same s.u. empirical error distribution analyzed in Figure \ref{fig: tentative combined overbound}, the overbounding procedures using tangential transition are exhibited in Figure \ref{fig: demo su} . Take the right half $T_R(x)$ for instance, the two tangential points on CDFs are denoted as $\left(x_1, F_C(x_1)\right)$ and $\left(x_2, F_G(x_2)\right)$, as shown in the inset plot in Figure \ref{fig: sym ovb cdf}. Assume that $0<x_1<x_2$, and then the tangential transition should satisfy the following conditions:
\begin{align}
    K=f_C(x_1)=f_G(x_2)=\frac{F_G(x_2)-F_C(x_1)}{x_2-x_1}\ ,
\end{align}
where $K$ is a constant that represents the slope of the tangential segments. The transition on the right half can thus be expressed as:
\begin{equation}
    T_R(x; x_1, x_2)  = K  \cdot \left(x-x_1 \right) + F_C(x_1)\quad \forall x_1 \leq x \leq x_2. \label{equ: transition R}
\end{equation}
Due to the s.u. property of both the Cauchy and Gaussian distributions, the tangential points on the left region are $\left(-x_1, F_C(-x_1)\right)$ and $\left(-x_2, F_G(-x_2)\right)$, and the left tangential transition is given by:
\begin{equation}
    T_L(x; x_1, x_2)  = K  \cdot \left(x+x_1 \right) + F_C(-x_1)\quad \forall -x_2 \leq x \leq -x_1.
    \label{equ: transition L}
\end{equation}
Appendix \ref{app 2: ovb during transition in s.u. cases} proves that the overbounding properties still exists within the transition region.

Figure \ref{fig: demo su} demonstrates the transition in three views, including CDF, quantile-scale\footnote{\enquote{Quantile-scale} indicates that the CDF values are transformed by $\Phi^{-1}(\cdot)$ to the equivalent standard normal quantiles.} CDF, and PDF. Notably, in Figure \ref{fig: sym ovb normcdf}, the Gaussian overbound is processed by the standard normal quantile function $\Phi^{-1}(\cdot)$ and behaves as a straight line. Besides, the left transition region $-x_2 \leq x \leq -x_1$  and the right transition region $x_1 \leq x \leq x_2$ are highlighted by the grey shaded area. To illustrate the details in transitions, an inset plot zooming in the right transition is provided for each of the three subfigures. The left transition is omitted for the present due to the similarity. In the zoom-in CDF view in Figure \ref{fig: sym ovb cdf}, the right transition $T_R(x; x_1, x_2)$ performs as a straight line segment, as defined in Equation \eqref{equ: transition R}. As shown in the inset of Figure \ref{fig: sym ovb normcdf}, the shape of the tangential transition is distorted due to the transformation $\Phi^{-1}(\cdot)$. In Figure \ref{fig: sym ovb pdf}, the tangential transition in the zoom-in PDF view performs as a horizontal line segment, which confirms the unimodality of the combined distribution.

\textbf{Step 3: Synthesize the two models and transitions} \\
The final step synthesizes the results of the previous steps to give the final overbound of the empirical errors. The CDF of the Cauchy-Gaussian overbound for s.u. errors can be explicitly defined as :
\begin{equation}
    F_{ob}(x)= \left\{
        \begin{array}{lr} 
          \min\left(F_C(x;0,\lambda_o), T_L(x; x_1, x_2), F_G(x;0,\sigma_o)\right)        &  \forall  x \leq 0 \\
          \max\left(F_C(x;0,\lambda_o), T_R(x; x_1, x_2),F_G(x;0,\sigma_o)\right)               &  \forall  x > 0
        \end{array}.
    \right.
    \label{equ: F_ob su}
\end{equation}
Figure \ref{fig: demo su} illustrates this piece-wise CDF overbounds with the blue dashed line. In the right region ($x>0$), the finalized overbound starts with the optimal Cauchy model when the absolute error is below $x_1$, followed by the tangential transitions from $x_1$ to $x_2$. $F_{ob}(x)$ then switches to the optimal Gaussian overbound when absolute error exceeds $x_2$. Since the Cauchy-Gaussian combined distribution in Equation \eqref{equ: F_ob su} is s.u., its overbounding properties can be preserved through convolutions \citep{decleene_defining_2000}. 

The optimal parameters of the Cauchy-Gaussian overbound on the empirical error defined in Equation \eqref{eq:example_error_type_I} are listed in Table \ref{tab: simu su paras}. The Gaussian component of the proposed overbound represents the location and scale parameters given by the optimal Gaussian overbound, while the Cauchy component stands for the counterparts yielded by the optimal Cauchy overbound. {As can be seen, the Cauchy component is characterized by an optimal scale parameter, $\lambda_o$. The resulting Cauchy core is significantly sharper and more concentrated than the core of the optimal Gaussian overbound. This sharpness allows the final Cauchy-Gaussian overbound to adhere more tightly to the empirical errors in the central region, as evidenced in Figure \ref{fig: demo su}.}


{The Cauchy-Gaussian overbound leverages both single-CDF Cauchy and single-CDF Gaussian overbounds to capture the distributional properties of heavy-tailed empirical errors. While heavy-tailed error features can also be characterized by a zero-mean BGMM \citep{blanch_position_2008}, the proposed overbound offers distinct advantages regarding parameter determination. Built on the core overbounding approach introduced by \citet{rife_core_2004}, the Cauchy-Gaussian overbound aims to balance between bound sharpness and implementation simplicity. It achieves an overall tight bound by piecewise synthesizing the sharp single-CDF overbounds against the error curve at the core and tail regions. Moreover, the proposed method replaces heuristic guesswork with a rigorous constrained optimization formulation. This eliminates the subjectivity of parameter tuning, providing a strictly deterministic procedure in contrast to the manual trial-and-error required for a BGMM overbound.}
\begin{table}[htb]
 \caption{Overbounding parameters for the simulated s.u. errors defined in Equation \eqref{eq:example_error_type_I}. $K$ represents the slope of the tangential transition, while $x_1$ and $x_2$ are the tangential points when $x>0$, as discussed in Step 2 of Section \ref{sec: method su errors}}
 \label{tab: simu su paras}
\begin{tblr}{colspec={X[c]X[c]X[c]},
width=\textwidth,
row{even} = {white,font=\small},
row{odd} = {bg=black!10,font=\small},
row{1} = {bg=black!20,font=\bfseries\small},
hline{Z} = {1pt,solid,black!60},
rowsep=3pt
}
Gaussian component $(\mu_o, \sigma_o)$ & Cauchy component $(m_o, \lambda_o)$ & Tangential transition $(K,x_1,x_2)$   \\
(0, 8.72m) & (0, 0.83m) & (2.00$\times$10$^{-3}$m$^{-1}$, 12.76m, 22.55m)
\end{tblr}
\end{table}

\subsection{Procedures of bounding n.s.u. errors}
\label{sec: method nsu errors}
As discussed in Section \ref{sec: paired ovb}, the overbound for the n.s.u profile can be handled with a pair of CDF bounds. It is also illustrated in Section \ref{sec: method su errors} that the combination of Cauchy core and Gaussian tails can more tightly bound the heavy-tailed empirical profiles. Building upon these findings, this section introduces a three-step procedure to extend the Cauchy-Gaussian overbound to n.s.u. error distributions, which are more frequently observed in real-world GNSS measurements. 

In the following discussion, we use randomly generated samples from the following biased BGMM to illustrate the construction process of the proposed method:
\begin{equation}
    f_e(x)=0.9f_G(x;0,1)+0.1f_G(x;1,10), 
    \label{eq:example_error_type_II}
\end{equation}
where the particular location shift of 1m is given to the second Gaussian component so that the resultant distribution is n.s.u. This empirical error distribution with a bias of 0.02m is displayed by black curves in Figure \ref{fig: demo nsu}.

\begin{figure}[t]
        \centering
        \begin{subfigure}{0.49\textwidth} 
            \includegraphics[width=0.95\textwidth]{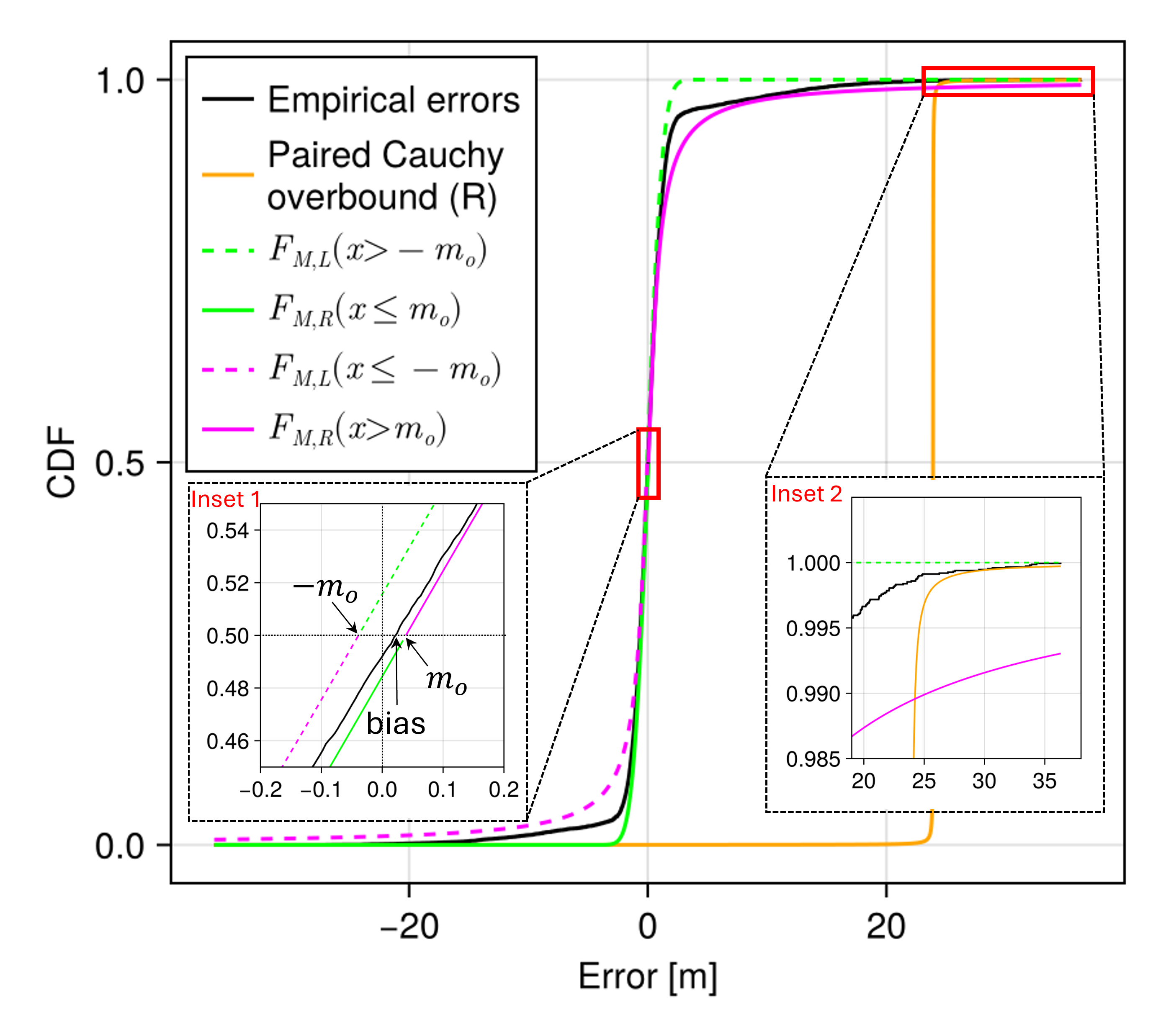}
            \caption{\centering}
            \label{fig: nsu ovb paired cauchy}
        \end{subfigure}
        \hfill
        \begin{subfigure}{0.49\textwidth} 
            \includegraphics[width=0.95\textwidth]{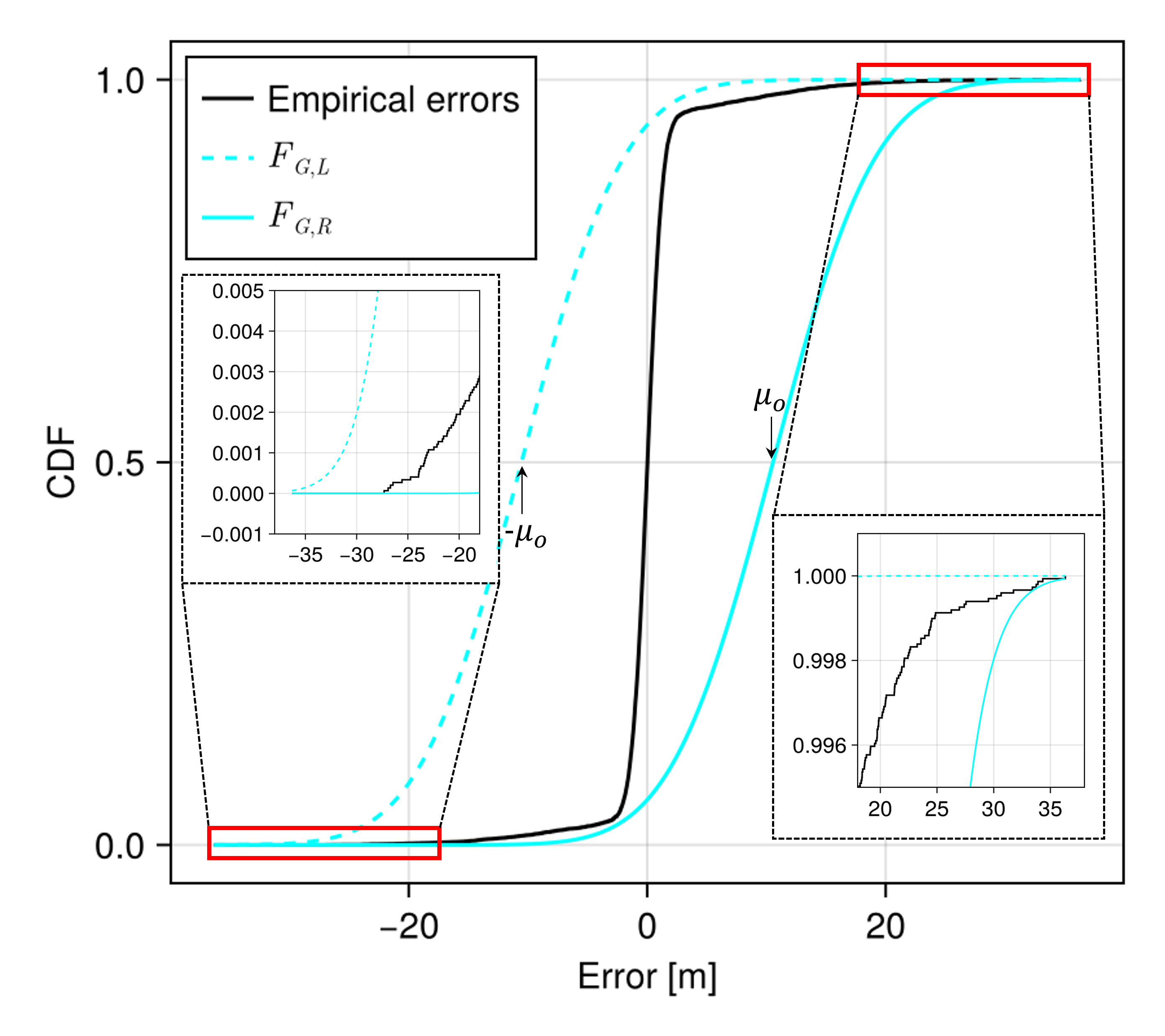}
            \caption{\centering}
            \label{fig: nsu ovb paired gaussian}
        \end{subfigure}
        \\
        \begin{subfigure}{0.49\textwidth} 
            \includegraphics[width=0.95\textwidth]{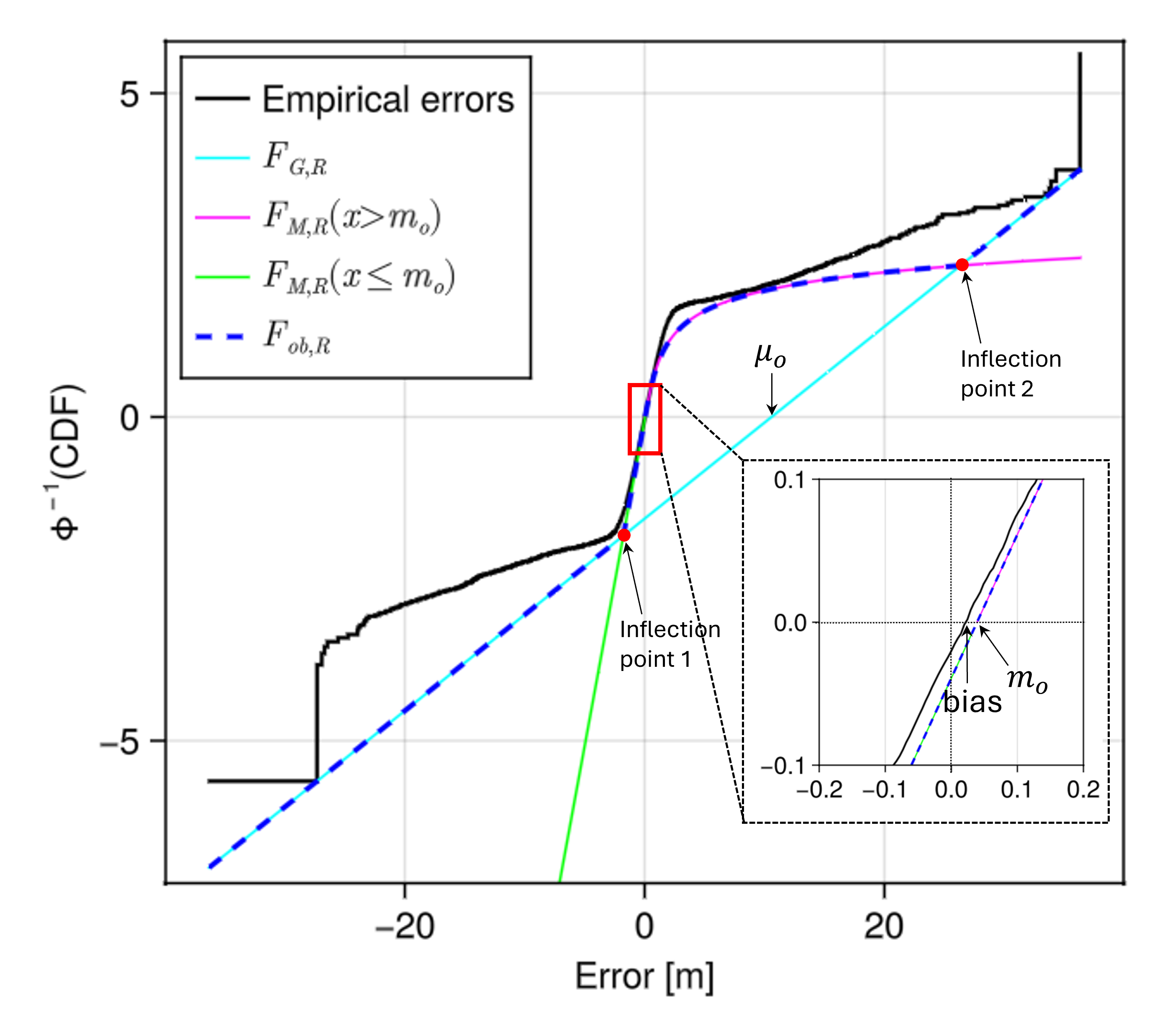}
            \caption{\centering}
            \label{fig: nsu ovb F_ob_R}
        \end{subfigure}
        \hfill 
        \begin{subfigure}{0.49\textwidth} 
            \includegraphics[width=0.95\textwidth]{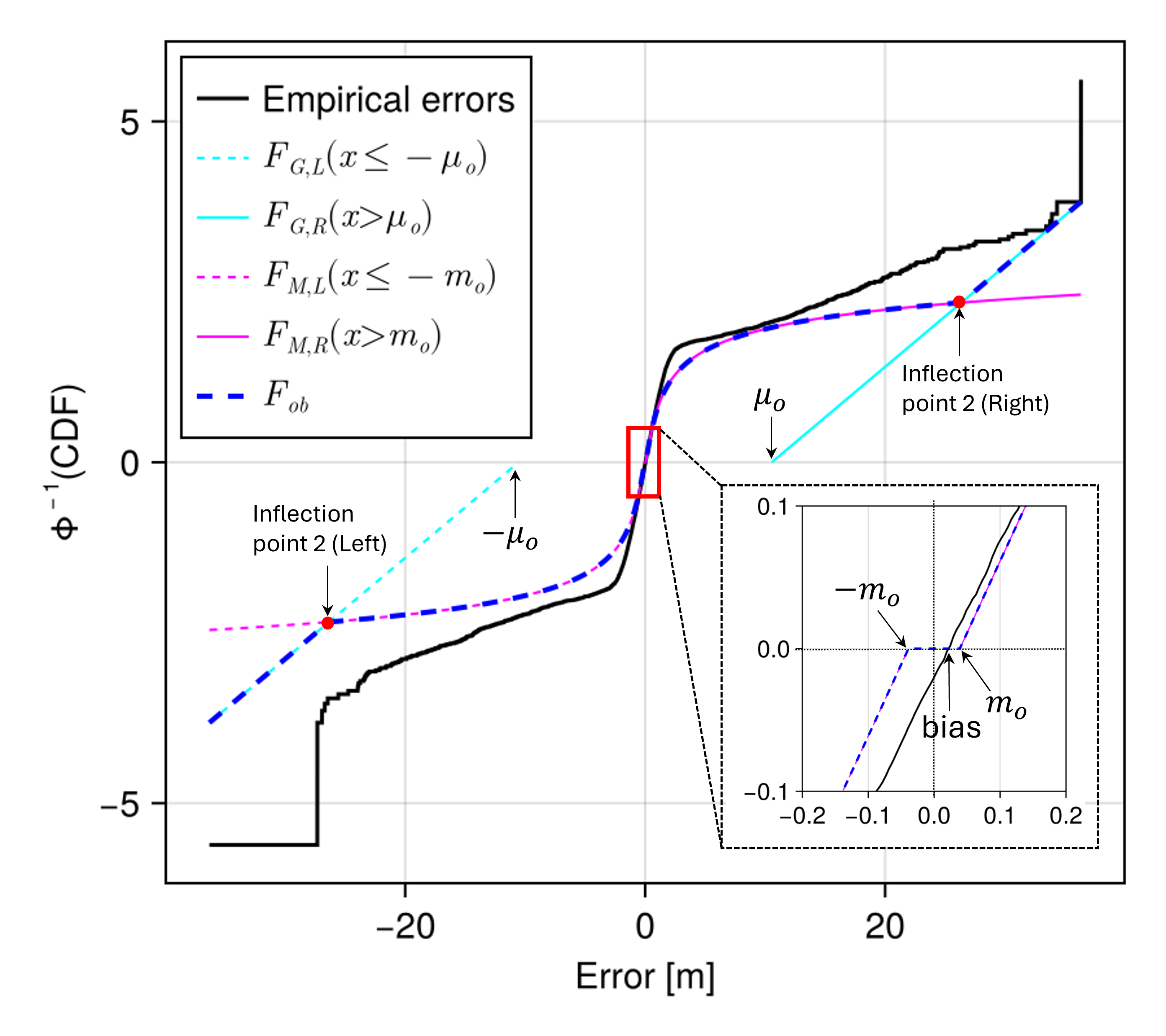}
            \caption{\centering}
            \label{fig: nsu ovb F_ob}
        \end{subfigure}
        \caption{The Cauchy-Gaussian overbounding results for biased BGMM errors defined in Equation \eqref{eq:example_error_type_II} (n.s.u. profile): (a) CDF of the optimized paired CGCM overbound in Step 1; (b) CDF of the optimized paired Gaussian overbound in Step 2; quantile-scale (c) right bound (defined in Equation \eqref{equ: synthesized CG R}) and (d) analog single-CDF (defined in Equation \eqref{equ: F_ob nsu single-cdf analog}) of finalized Cauchy-Gaussian overbound.}
        \label{fig: demo nsu}
\end{figure}

\textbf{Step 1: Construct optimal paired CGCM overbound} \\
\label{sec: paired CGCM}
The first step features finding the paired overbound with the Cauchy distribution. Tentatively, we assume a pair of pure Cauchy distributions to construct the two-sided overbound. According to Equation \eqref{equ: paired L} and \eqref{equ: paired R}, the paired Cauchy overbound should satisfy: 
\begin{align}
    F_C(x; m_L,\lambda_L)  &\geq F_e(x)\quad\forall x, \label{equ: paired Cauchy L} \\
    F_C(x; m_R,\lambda_R)  &\leq F_e(x)\quad\forall x. \label{equ: paired Cauchy R}
\end{align}
However, a typical pair of Cauchy overbounds satisfying the above conditions is found to yield significantly large location parameters ($|m_L| = |m_R|=$ 23.90) yet considerably small scale parameters ($\lambda_L = \lambda_R=$ 0.01). Figure \ref{fig: nsu ovb paired cauchy} visualizes the right half of paired Cauchy overbound, and the left half is inferrable due to symmetry. As can be seen, the paired Cauchy overbound evidently separates from the empirical errors and results in overly conservative bounding. This phenomenon is mainly caused by the extreme heavy-tailedness of the Cauchy distribution. Take the error bounding on the right half, for instance, the long and fat left tail of the Cauchy distribution hinders the Cauchy CDF from completely lying below the empirical distribution (as required in Equation \eqref{equ: paired Cauchy R}) with a small location shift. As a result, the paired Cauchy overbound has to compromise with the constraint in Equation \eqref{equ: paired Cauchy R} using a significantly large location parameter, which induces excessive conservatism.

Inspired by the above analysis, we replace the left tail of the right Cauchy bound and the right tail of the left Cauchy bound with Gaussian tails to prevent the location parameter from over-inflation. Specifically, we propose the paired Cauchy-Gaussian combined models (CGCM) with its left ($F_{M,L}$) and right ($F_{M,R}$) halves constructed as follows:
\begin{align}
    F_{M,L}(x)&= \left\{
        \begin{array}{lr} 
          F_C\left(x;-m,\lambda\right)        &  \forall x \leq -m \\
          F_{G}\left(x;-m,k\lambda\right) &  \forall x > -m
        \end{array},
    \right.  \\
    F_{M,R}(x)&= \left\{
        \begin{array}{lr} 
          F_{G}\left(x;m,k\lambda\right)        &  \forall x \leq m \\
          F_C\left(x;m,\lambda\right) &  \forall x > m
        \end{array}.
    \right.
\end{align}
Notably, we use $k=\sqrt{\frac{\pi}{2}}$ for the paired CGCM. This setting means the centrally-aligned $F_C$ and $F_{G}$ share the same value at $x=-m$ and $x=m$, and their derivatives (i.e., $f_C$ and $f_G$) are equal at these two locations. Therefore, each half (i.e., $F_{M,L}$ or $F_{M,R}$) is continuous and differentiable throughout the coordinate. Moreover, the setting $k=\sqrt{\frac{\pi}{2}}$ establishes a one-to-one relationship between the aligned Cauchy and Gaussian distributions, which avoids introducing additional parameters.\par

Notably, the right half ($F_{M,R}$) of the paired CGCM overbound contains Gaussian when $x\leq m$ and Cauchy when $x>m$. Due to the Cauchy distribution's heavy-tailedness, the probability mass of the combined distribution is transferred rightwards, making the mean exceed the median (i.e., right-skewed). Compared to a Gaussian tail, the Cauchy tail potentially enables the right-half CGCM to bound right-biased heavy-tailed empirical errors with a lower location parameter. Similarly, the left-skewed $F_{M,L}$ can largely reduce the location shift when bounding left-biased heavy-tailed error distributions.

The following optimization problem is constructed to solve the optimal location ($m_o$) and scale ($\lambda_o$) parameters of the paired CGCM overbound:
\begin{subequations}
\begin{align}
        \quad& m_o, \lambda_o = \argmin_{m, \lambda} \sum \limits_{x\in \Omega} ||F_{M,L}(x)-F_e(x)|| + ||F_{M,R}(x)-F_e(x)|| \\
        s.t.\quad \quad & F_{M,L}(x) \geq F_e(x)\quad \forall x \in \Omega \\
         & F_{M,R}(x) \leq F_e(x)\quad \forall x \in \Omega,
    \end{align}
	\label{opt1: getting optimal combined pairs}
\end{subequations}
where $||\cdot||$ denotes the Euclidean norm and $\Omega$ is the range of empirical errors. The optimization problem minimizes the discrepancy between the empirical and overbounding distributions, thereby achieving tight bounding\par

Figure \ref{fig: nsu ovb paired cauchy} displays the optimal left and right bounds using CGCM. Each bound contains the Cauchy (magenta) and Gaussian (green) components. The empirical errors are bounded with a substantially small location parameter ($m_o=$ 0.04m). $F_{M,R}$ transits from Gaussian to Cauchy at $x=m_o$, while $F_{M,L}$ transits from Cauchy to Gaussian at $x=-m_o$. The inset plot 1 highlights the smooth junction between the Gaussian and Cauchy distributions in the right (at $x=m_o$) and left regions (at $x=-m_o$). Take the right region, for instance, Figure \ref{fig: nsu ovb paired cauchy}  also validates the capability of CGCM to generate much tighter bounds than the paired Cauchy approach. \par

Notably, the right tail of $F_{M,R}$ and left tail of $F_{M,L}$ employ the Cauchy distribution, which is proven to be significantly more heavy-tailed than the Gaussian distribution in Section \ref{sec: cauchy distribution}. This phenomenon is also reflected in the inset plot 2 in Figure \ref{fig: nsu ovb paired cauchy}, where the heavy-tailed $F_{M,R}$ bounds the empirical errors insufficiently tightly at the right tail region. The heavy-tailed characteristic of the Cauchy tail will be further enhanced after convoluting multiple ranging error sources, eventually producing excessively conservative PLs in the position domain. To reduce the conservatism, we propose replacing these Cauchy tails with Gaussian tails from the optimal paired Gaussian overbound, which will be illustrated in the following section.

\textbf{Step 2: Construct optimal paired Gaussian overbound} \\
The second step aims to determine the optimal pair of Gaussian overbounds with the location parameter $\mu_o$ and scale parameter $\sigma_o$. The optimal left ($F_{G,L}(x)$) and right ($F_{G,R}(x)$) overbounds are the ones that can give the minimum least square sum of CDF differences with the empirical CDF $F_e(x)$. Similar to Step 1, the optimization problem can be expressed as:
\begin{subequations}
\begin{align}
        \quad& \mu_o, \sigma_o = \argmin_{\mu, \sigma} \sum \limits_{x\in \Omega} ||F_{G,L}(x)-F_e(x)|| + ||F_{G,R}(x)-F_e(x)|| \\
s.t.\quad\quad& F_{G,L}(x) = F_{G}(x;-\mu,\sigma) \geq F_e(x)\quad \forall x \in \Omega \\
& F_{G,R}(x)=F_{G}(x;\mu,\sigma) \leq F_e(x)\quad \forall x \in \Omega.
    \end{align} \label{opt2: getting optimal gaussian pairs}
\end{subequations}

The CDF of the optimal paired Gaussian overbounds are shown as cyan lines in Figure \ref{fig: nsu ovb paired gaussian}. The zoom-in views confirm the strict left and right bounds are satisfied at tail regions according to Equation \eqref{equ: paired L} and \eqref{equ: paired R}.

\textbf{Step 3: Synthesize the optimal paired CGCM overbound and paired Gaussian overbound} \\   
The final step integrates optimal paired overbounding models from the first two steps. Specifically, the finalized Cauchy-Gaussian overbound for n.s.u. error distributions are defined as follows:
\begin{align}
    F_{ob,L}(x) &=  \min\left(F_{M,L}(x), F_{G,L}(x)\right) \quad\forall x \in \Omega, \label{equ: synthesized CG L}\\
      F_{ob,R}(x) &=  \max\left(F_{M,R}(x), F_{G,R}(x)\right) \quad\forall x \in \Omega.  \label{equ: synthesized CG R}
\end{align}
Mathematically, $F_{ob,L}(x)$ represents the point-wise infimum of $F_{M,L}(x)$ and $ F_{G,L}(x)$, while $F_{ob,R}(x)$ is the point-wise supremum of $F_{M,R}(x)$ and $ F_{G,R}(x)$. Different from Equation \eqref{equ: F_ob su} for bounding s.u. errors, the tangential transitions are unnecessary in bounding n.s.u. error distributions. The reasons are that the paired bounding approach does not require the left and right bounds to be zero-mean s.u. distributions to maintain the overbounding properties through convolutions \citep{rife_paired_2006}. \par

Take the right bound, for instance, Figure \ref{fig: nsu ovb F_ob_R} shows how $F_{ob,R}(x)$ is constructed from the right half of optimal paired CGCM overbound ($F_{M,R}(x)$) and optimal paired Gaussian overbound ($F_{G,R}(x)$). Notably, Gaussian distributed curves $F_{G,R}$ and $F_{M,R}(x\leq m_o)$ in the quantile-scale CDF  transformed by $\Phi^{-1}(\cdot)$ behave as straight lines. According to Equation \eqref{equ: synthesized CG R}, the piecewise right bound begins with $F_{G,R}(x)$ at the far end of negative error, transits to $F_{M,R}(x)$ after the inflection point 1, and switches back to $F_{G,R}(x)$ after inflection point 2, till the positive far end. The left bound $F_{ob,L}(x)$ is omitted in Figure  \ref{fig: nsu ovb F_ob_R} but can be inferred due to the symmetry with the right bound. Correspondingly, the following analog single-CDF overbound $F_{ob}(x)$ based on the paired overbound Equation \eqref{equ: synthesized CG L} and \eqref{equ: synthesized CG R} can be constructed through Equation \eqref{equ: single-cdf analog}:
\begin{equation}
F_{ob}(x)= \left\{
    \begin{array}{lr} 
      \min\left(F_{M,L}(x), F_{G,L}(x)\right) & \forall  F_{ob,L} < \frac{1}{2} \\
      \frac{1}{2}      &  \text{otherwise}\\
      \max\left(F_{M,R}(x), F_{G,R}(x)\right)  &  \forall  F_{ob,R} > \frac{1}{2}
    \end{array}.
\right.
\label{equ: F_ob nsu single-cdf analog} 
\end{equation}
Figure \ref{fig: nsu ovb F_ob} depicts the quantile-scale CDF of $F_{ob}$ and displays its relationship with the paired CGCM overbound ($F_{M,L},F_{M,R}$) and paired Gaussian overbound ($F_{G,L},F_{G,R}$). As can be seen, $F_{ob}$ employs the paired CGCM overbound at small absolute error values. Subsequently, $F_{ob}$ transits to the optimal paired Gaussian overbound at the tail region after the left and right intersection point 2. Finally, the inset plots in both Figure \ref{fig: nsu ovb F_ob_R} and \ref{fig: nsu ovb F_ob} show that the empirical error bias is bounded between $-m_o$ and $m_o$ at the core region. \par
\begin{table}[htb]
 \caption{Cauchy-Gaussian overbounding parameters for the simulated n.s.u. errors defined in Equation \eqref{eq:example_error_type_II}.}
 \label{tab: simu nsu paras}
\begin{tblr}{colspec={X[c]X[c]},
width=\textwidth,
row{even} = {white,font=\small},
row{odd} = {bg=black!10,font=\small},
row{1} = {bg=black!20,font=\bfseries\small},
hline{Z} = {1pt,solid,black!60},
rowsep=3pt
}
Optimal paired CGCM overbound $(m_o, \lambda_o)$ & Optimal paired Gaussian overbound $(\mu_o, \sigma_o)$   \\
(0.04m, 0.79m) & (10.55m, 6.75m)   
\end{tblr}
\end{table}
Table \ref{tab: simu nsu paras} displays the parameters of the Cauchy-Gaussian overbound $F_{ob}$ for the simulated n.s.u. errors defined in Equation \eqref{eq:example_error_type_II}, which contains the optimal location and scale parameters from the paired CGCM overbound ($m_o$, $\lambda_o$) and the paired Gaussian overbound ($\mu_o$, $\sigma_o$). The CGCM component features an extremely small location parameter ($m_o=$ 0.04m), which adheres the bounding distribution closely to the empirical curves at the central region. Although the Gaussian component has a large location parameter ($\mu_o=$ 10.55m), it only contributes to the bounding at the tail regions, as reflected in Figure \ref{fig: nsu ovb F_ob_R} and \ref{fig: nsu ovb F_ob}. Therefore, when bounding central-region errors, $\mu_o$ in the final Cauchy-Gaussian overbound does not introduce excessive conservatism compared to the corresponding parameter in the paired Gaussian overbound.

\subsection{Position domain bounding}
{For GNSS positioning, the measurement errors are projected to the position domain. It is essential to derive the error bound for the positioning error. Consider $N$ ranging error sources denoted by the vector $\bm{\varepsilon}$, where each element $\bm{\varepsilon_i}$ has an index $i\in\{1,2,...,N\}$. {In this paper, we assume error sources in $\bm{\varepsilon}$ are mutually-independent.} The vertical positioning error (VPE) can be expressed as:
\begin{equation}
    \text{VPE} = \sum\limits^{N}_{i=1} \textbf{S}_{3,i}\bm{\varepsilon}_i, \label{equ: VPE}
\end{equation}
where $\textbf{S}_{3,i}$ is the third-row and $i^{th}$-column element in the projection matrix $\mathbf{S}$ of the least-square or weighted least-square solution. $\text{VPE}$ is mathematically the linear combination of the $N$ error sources, each of which has an individual error distribution. As such, the PDF of $\text{VPE}$, $f_\text{VPE}$, is a joint distribution given by multiple measurement error distributions. According to \citet{yan_principal_2024}, $f_\text{VPE}$ can be formulated as:
\begin{equation}
    f_{\text{VPE}}(x)= \left( \prod_{i=1}^{N} \frac{1}{|\textbf{S}_{3,i}|}\right)   \left(\Conv_{i=1}^{N}  f_{\bm{\varepsilon}_i}\left(\frac{x}{|\textbf{S}_{3,i}|} \right) \right), \label{equ: f_VPE 1}
\end{equation}
where $f_{\bm{\varepsilon}_i}$ denotes the empirical distribution related to the $i^{th}$ measurement error source. The operators $\Pi$ and $\circledast$ represent the multiplication and convolution of multiple terms, respectively. A proof of Equation \eqref{equ: f_VPE 1} is provided in Appendix \ref{app 3: proof of f_VPE 1}.  For the convenience of computations,  the worst-case error source ${\bm{\varepsilon}_w}$ (where $w$ is a certain index among $\{1,2,\ldots,N\}$) with the largest variance will be selected to represent each error source $\bm{\varepsilon}_i$. In this way, the most conservative empirical distribution $\overline{f}_{\text{VPE}}$ is constructed by:
\begin{equation}
     \overline{f}_{\text{VPE}}(x)= \left( \prod_{i=1}^{N} \frac{1}{|\textbf{S}_{3,i}|}\right)   \left(\Conv_{i=1}^{N}  f_{\bm{\varepsilon}_w}\left(\frac{x}{|\textbf{S}_{3,i}|} \right) \right).
\end{equation}
As such, to obtain the overbound for the most conservative VPE, only the distribution $f_{\bm{\varepsilon}_{w,ob}}$ associated with the overbound against the measurement error $\bm{\varepsilon}_w$ needs to be determined. Notably, $f_{\bm{\varepsilon}_{w,ob}}$ denotes the PDF of the bound for s.u. error. For the n.s.u. error, it represents the PDF of either the right or left half of the paired bound. The position error overbounding distribution can be expressed by:
\begin{equation}
     \overline{f}_{\text{VPE},ob}(x)= \left( \prod_{i=1}^{N} \frac{1}{|\textbf{S}_{3,i}|}\right)   \left(\Conv_{i=1}^{N}  f_{\bm{\varepsilon}_{w,ob}}\left(\frac{x}{|\textbf{S}_{3,i}|} \right) \right). \label{equ: position ob continuous}
\end{equation}
A discretized form of position-domain overbound, $\overline{X}_{\text{VPE},ob}$, is constructed as follows \citep{yan_principal_2024}:
\begin{equation}
     \overline{X}_{\text{VPE},ob}[k]= \left( \prod_{i=1}^{N} \frac{1}{|\textbf{S}_{3,i}|}\right)   \left(\Conv_{i=1}^{N}  X_{\bm{\varepsilon}_{w,ob}} \right),
\end{equation}
where $X_{\varepsilon_{w,ob}}$ is the discretized $f_{\bm{\varepsilon}_{w,ob}}$. As for $\overline{X}_{\text{VPE},ob}[k]$ ($k\in\{1,2,\ldots, M\}$), $k$ is the index and $M$ is the length of the discrete VPE overbounding distribution. Accordingly, the discretized range of position error values is noted as $T[k], k\in\{1,2,\cdots,M\}$, with a unit interval length $\Delta t$. It is proven that this discretization method can preserve the overbounding properties through convolution. Although the discretized position error bound is calculated numerically, the convolution's computation time can be significantly reduced by using the fast Fourier transform for practical considerations \citep{nussbaumer_fast_1982}. For details, one can refer to \citep{yan_principal_2024}. For a paired overbounding approach, the position error bound is typically computed by convolving both the left and right bounds. However, since the paired Cauchy-Gaussian overbound is symmetric, the left bound can be inferred from the right. Therefore, for the remainder of this discussion, we will only utilize the right bound to generate the position-domain bound $\overline{X}_{\text{VPE},ob}[k]$.}\par
{Under the requirement of integrity risk $P_{HMI}$, a particular index $k_p\in  \{1,2,...M\}$ is determined such that \citep{yan_principal_2024}:
\begin{align}
    \sum \limits_{k=1}^{k_p-1} \overline{X}_{\text{VPE},ob}[k] \cdot \Delta t  &\leq 1-\frac{P_{HMI}}{2} , \\
    \sum \limits_{k=1}^{k_p} \overline{X}_{\text{VPE},ob}[k] \cdot \Delta t &>  1-\frac{ P_{HMI}}{2},
\end{align}
and the VPL is obtained by
\begin{equation}
    \text{VPL} = T[k_p]. \label{equ: VPL}
\end{equation}}

\section{Numerical Experiments}
\label{sec: experiments and evaluations}
In this study, we compare the bounding performance of the proposed method with the single-CDF Gaussian overbound and the two-step Gaussian overbound. {The empirical error distributions are constructed from double-differenced pseudorange measurements generated by a differential GNSS (DGNSS) model \citep{misra_global_2006, parkinson_basis_1987}.} {Inspired by the work of \citet{larson_gaussianpareto_2019}, this paper assumes that the DGNSS measurement errors are mutually independent, {as the marginal inter-measurement correlation is expected to have a negligible impact on the computed VPL.}} For s.u. error distributions, the experiment will be conducted on a simulated DGNSS error dataset because the strictly s.u. property can hardly exist in real-world measurement errors. For n.s.u. errors, the bounding performance is validated through a real DGNSS dataset collected in the Hong Kong urban environment.

\subsection{Bounding performance for s.u. errors}
\label{sec: exp su errors}
In Section \ref{sec: method su errors}, we use an example zero-mean BGMM to illustrate the bounding performance of the proposed method on a s.u. heavy-tailed error distribution in the range domain. {In this section, we further use this BGMM model to represent the DGNSS measurement error distribution for the entire time series of each satellite and examine the bounding performance in the position domain.} Specifically, we use the open-sourced MATLAB Algorithm Availability Simulation Tool (MAAST)  \citep{jan_matlab_2001} to simulate the positions of GPS and Galileo satellites every 100 seconds over 24 hours. We set two imaginary locations (displayed in Appendix \ref{app 0: LLH, MAAST}) for the receiver and reference station, separated by a distance of approximately 5.58km. {For each epoch, we generate random samples using a pseudorandom number generator in MATLAB from the zero-mean BGMM in Equation \eqref{eq:example_error_type_I} and add them to the true differential ranges to create the simulated DGNSS measurements.}
\begin{figure}[t]
    \centering
    \includegraphics[width = 0.95\textwidth]{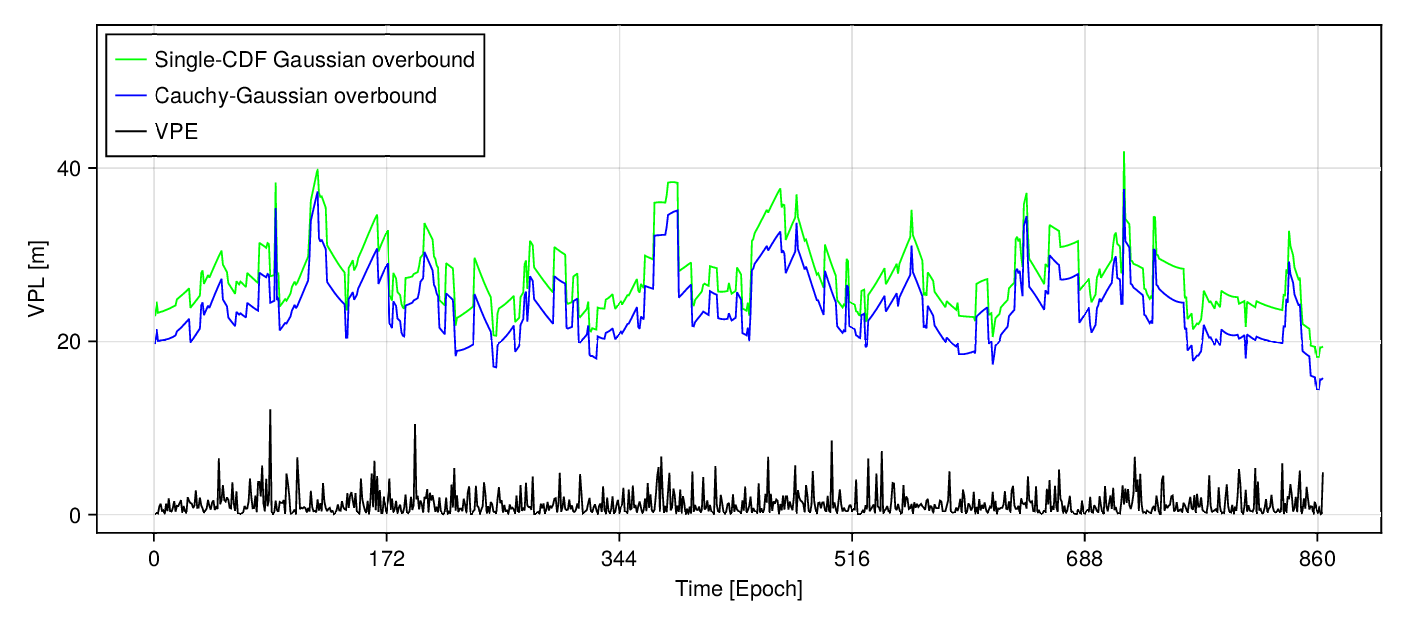}
    \caption{VPL of single-CDF Gaussian and Cauchy-Gaussian overbounding methods, with $P_{HMI} = 1\times 10^{-9}$. The VPEs are also plotted for reference.}
    \label{fig: simu su pl} 
\end{figure}
Over the 864 epochs spanning 24 hours, the positioning solution based on DGNSS measurements is calculated using the least-squares method and the VPEs are exhibited in Figure \ref{fig: simu su pl}. The mean VPE is 1.23m and the maximum reaches 12.15m. In each epoch, we calculate the VPL of the single-CDF Gaussian and Cauchy-Gaussian overbounding methods via Equation \eqref{equ: VPL}, and plot the results in Figure \ref{fig: simu su pl}. The calculation of VPL adopts the same discretization method and settings in \citep{yan_principal_2024}, where the unit interval length $\Delta t$ is 0.01m and $P_{HMI}$ is set to be 1$\times$10$^{-9}$. As shown in Figure \ref{fig: simu su pl}, the VPL produced by the Cauchy-Gaussian overbound is smaller than that by the single-CDF Gaussian overbound in all epochs. Table \ref{tab: simu su pl improve} summarizes the percentage of reduction in VPL achieved by Cauchy-Gaussian overbound compared to the Gaussian overbound. Specifically, the proposed method reduces the VPL by 14.95\% on average. The maximum reduction reaches 26.14\% and the minimum reduction still has 6.78\%. {Besides, the 25th (Q1), 50th (Q2), and 75th (Q3) percentiles varying from 12.61\% to 16.81\% are provided for a more detailed overview of VPL reduction.}

\begin{table}[htb]
 \caption{{Reduction in VPL by the Cauchy-Gaussian overbound compared to the single-CDF Gaussian overbound in the simulated s.u. error dataset. Q1, Q2, and Q3 represent the 25th, 50th and 75th percentiles, respectively.}}
 \label{tab: simu su pl improve}
{
\begin{tblr}{colspec={X[c]X[c]X[c]X[c]X[c]X[c]X[c]},
width=\textwidth,
row{even} = {white,font=\small},
row{odd} = {bg=black!10,font=\small},
row{1} = {bg=black!20,font=\bfseries\small},
hline{Z} = {1pt,solid,black!60},
rowsep=3pt
}
Metric & Average & Maximum  & Minimum & Q1 & Q2 & Q3\\
Percentage of reduction & 14.95\% & 26.14\% & 6.78\%  & 12.62\% & 14.31\% & 16.81\% 
\end{tblr}}
\end{table}

\subsection{Bounding performance for n.s.u. errors}
\label{sec: exp nsu errors}
DGNSS measurements collected in the Hong Kong urban environment are used to verify the feasibility of the proposed method on bounding real-world n.s.u. errors. The receiver and the reference station are separated by about 4.74km, with the locations displayed in Appendix \ref{app 0: LLH, MAAST}. {This urban dataset collects L1 GPS, BeiDou, and GLONASS signals using u-blox ZED-F9P at a frequency of 1 Hz on June 28$^{th}$, 2024, producing DGNSS measurements covering one hour. Following the settings in \citep{yan_principal_2024}, we select fault-free DGNSS measurements by filtering for observations with an elevation angle above 30$^\circ$ and a signal-to-noise ratio (SNR) greater than 35 dB. To ensure the dataset exhibits the desired heavy-tailed properties for the bounding experiment, we select double-differenced measurement errors from satellites at low elevation angles (30$^{\circ}-$35$^{\circ}$), where satellite signals are known to be susceptible to multipath effects in urban scenarios \citep{peretic_statistical_2025}.} {After the filtering, the 1-hour dataset generates 2887 effective epochs, where 4562 DGNSS error samples constitute the empirical distribution. The number of unique contributing satellites across the epochs are displayed in Figure \ref{fig: unique sats}. } In the following experiments, the two-step Gaussian overbound \citep{blanch_gaussian_2018} and the non-Gaussian NavDEN overbound \citep{rife_overbounding_2012}  are utilized as the benchmark to compare with the proposed Cauchy-Gaussian overbound. Details on the NavDEN model are provided in Appendix \ref{app 4: navden model}.

\begin{figure}[t]
    \centering
    \includegraphics[width = 0.95\textwidth]{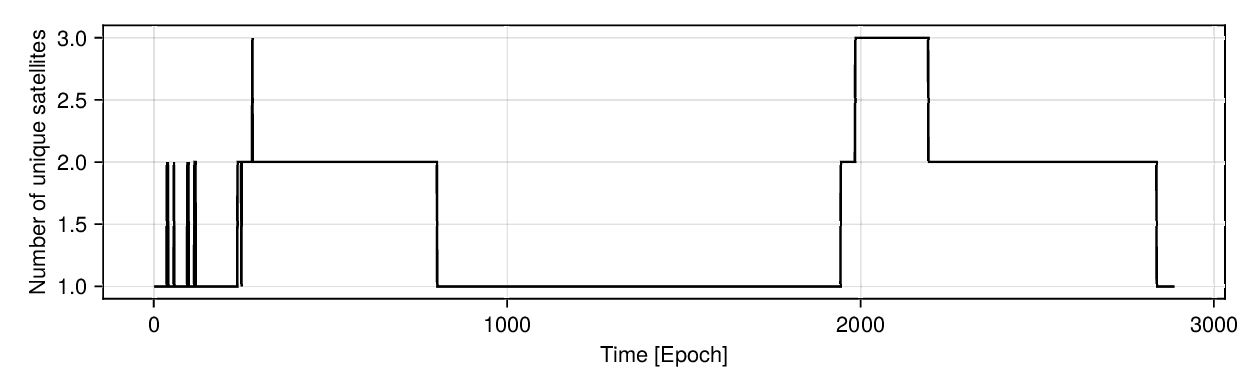}
    \caption{{The number of unique contributing satellites in each epoch from the filtered dataset.}}
    \label{fig: unique sats} 
\end{figure}

\begin{figure}[t]
        \centering
        \begin{subfigure}{0.49\textwidth} 
            \includegraphics[width=0.95\textwidth]{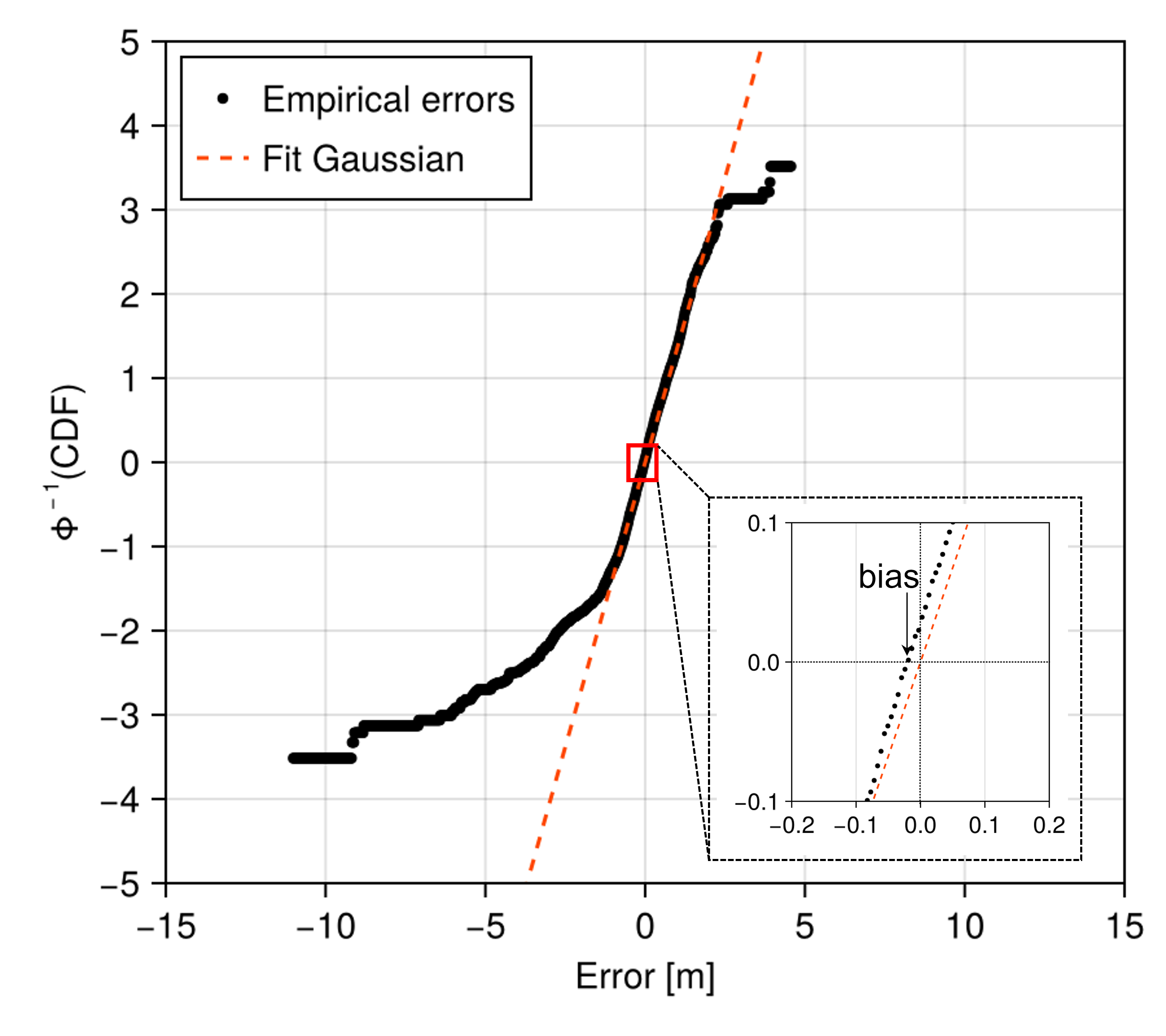}
            \caption{\centering}
        \end{subfigure}
        \hfill 
        \begin{subfigure}{0.49\textwidth} 
            \includegraphics[width=0.95\textwidth]{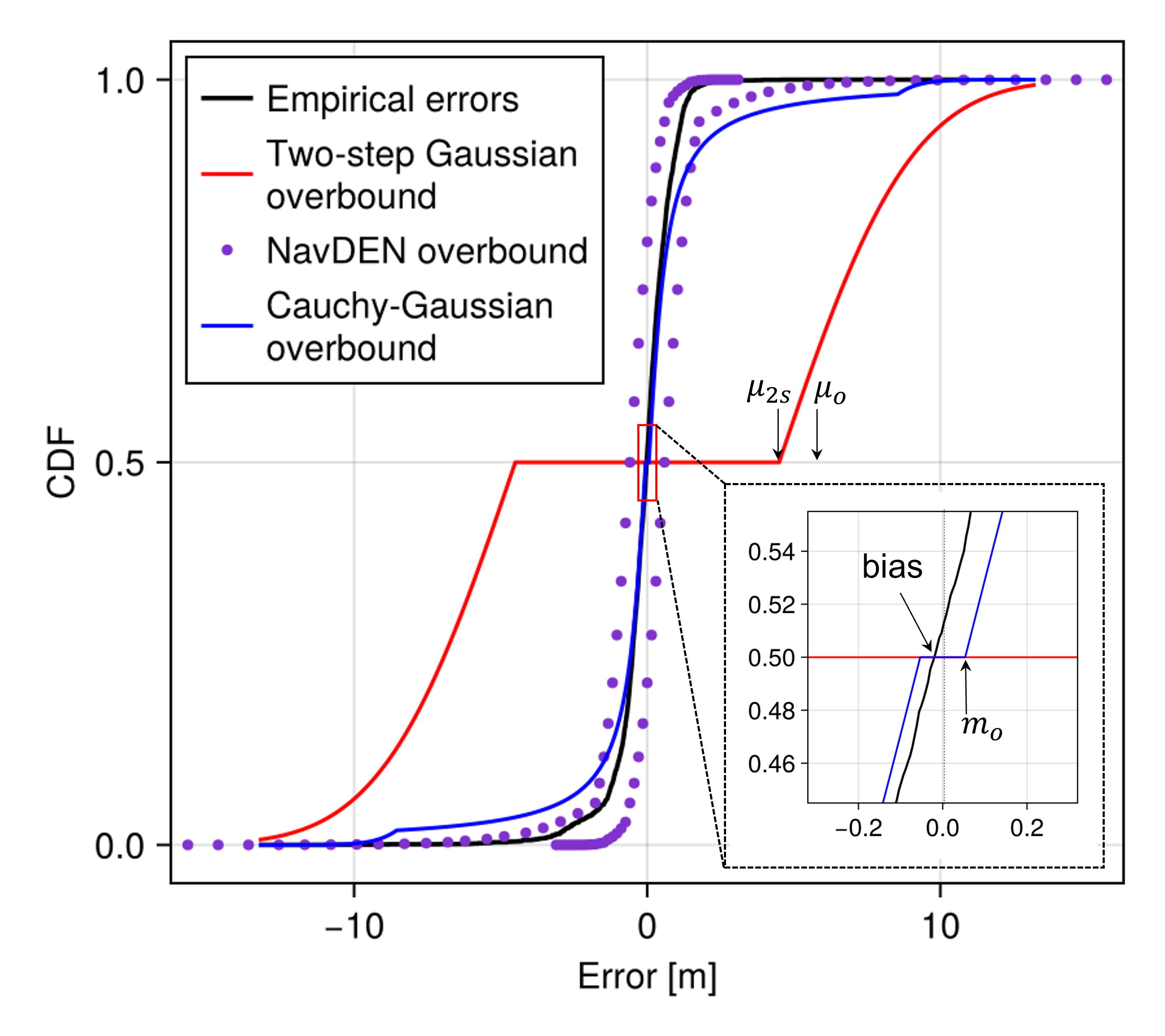}
            \caption{\centering}
        \end{subfigure}
        \\
        \begin{subfigure}{0.49\textwidth} 
            \includegraphics[width=0.95\textwidth]{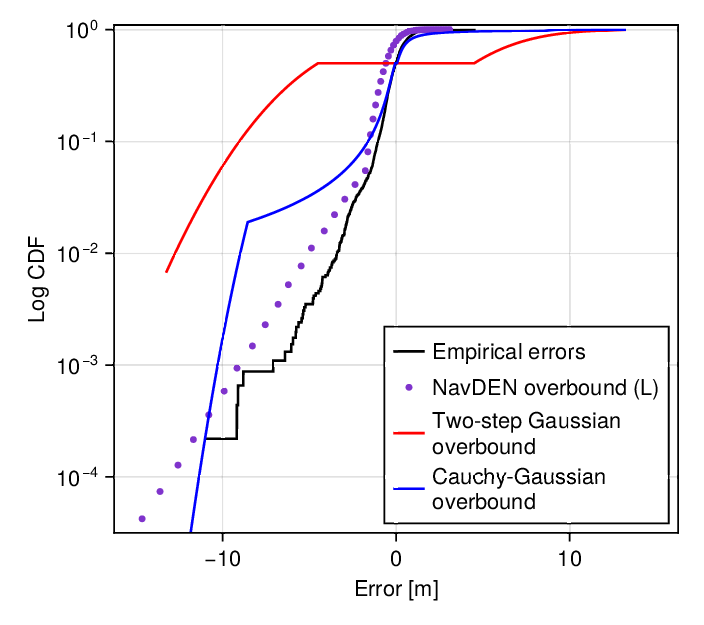}
            \caption{\centering}
        \end{subfigure}
        \hfill 
        \begin{subfigure}{0.49\textwidth} 
            \includegraphics[width=0.95\textwidth]{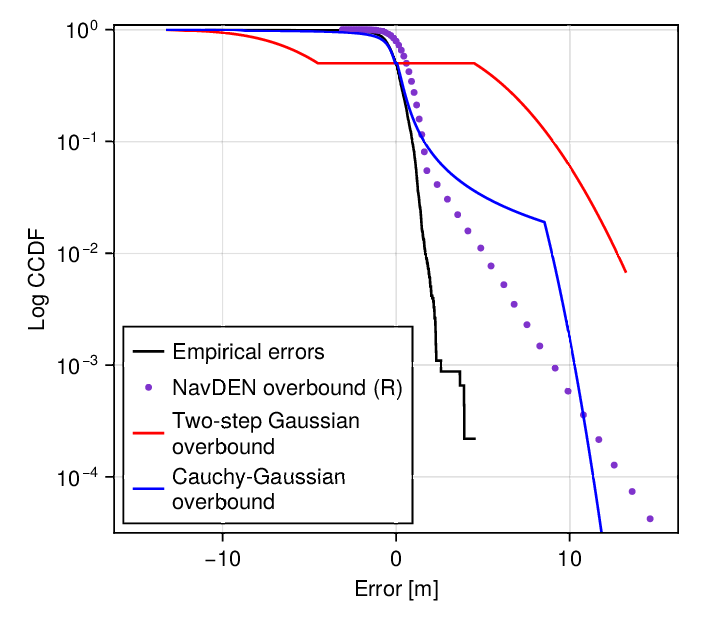}
            \caption{\centering}
        \end{subfigure}
        \caption{{Evaluation of overbounding performance for n.s.u. errors in the urban dataset. (a) Quantile-scale CDF showing empirical DGNSS error during one hour with observed elevation angle between 30$-$35$^\circ$. Overbounding results of two-step Gaussian and Cauchy-Gaussian in three views: (b) CDF; (c) Log CDF; (d) Log CCDF, where the Cauchy-Gaussian overbound takes the analog single-CDF form defined in Equation \eqref{equ: F_ob nsu single-cdf analog}.}}
        \label{fig: real nsu2 ovb}
\end{figure}

{The quantile-scale CDF in Figure \ref{fig: real nsu2 ovb}a confirms the error distribution is neither symmetric nor unimodal, with a bias of -0.019m revealed by the inset plot. For comparison, a zero-mean Gaussian model (red dashed line) is fitted to the central 95\% of the empirical data. While the empirical curve (black) aligns well with this fit in the core region, it deviates significantly in the tails. This separation begins around -1.5m on the left and 2m on the right, with the larger deviation in the left tail indicating it is fatter and longer than the right tail.}

{Table \ref{tab: real nsu2 paras} summarizes the parameters of the NavDEN, the two-step Gaussian, and Cauchy-Gaussian overbounding methods. Multiple parameters are employed to define different parts of a NavDEN overbound and their meanings have been summarized in Appendix \ref{app 4: navden model}. The location and scale parameters (i.e., ($\mu_{2s}$, $\sigma_{2s}$)) of the two-step Gaussian overbound are generated by the open-source MATLAB tool in \citep{blanch_gaussian_2018}.} The optimal parameters of the Cauchy-Gaussian overbound defined in Equation \eqref{opt1: getting optimal combined pairs} and \eqref{opt2: getting optimal gaussian pairs} are determined by a derivative-free Mesh Adaptive Direct Search (MADS) solver \citep{audet_mesh_2006}. Besides, the proposed overbound contains the CGCM-component parameters (i.e., ($m_o$, $\lambda_o$)) from the optimal paired CGCM overbound and the Gaussian-component parameters (i.e., ($\mu_o$, $\sigma_o$)) from the optimal paired Gaussian overbound.

\begin{table}[htb]
 \caption{{Overbounding parameters of NavDEN, Cauchy-Gaussian and two-step Gaussian methods for the urban dataset (n.s.u. profile). The tilde notations for NavDEN overbound indicate that the parameters are divided by the fundamental spacing $\Delta$ to be unitless.}}
 \label{tab: real nsu2 paras}
{
\begin{tblr}{colspec={X[c]X[c]X[c]},
width=\textwidth,
row{even} = {white,font=\small},
row{odd} = {bg=black!10,font=\small},
row{1} = {bg=black!20,font=\bfseries\small},
hline{Z} = {1pt,solid,black!60},
rowsep=3pt
}
Overbounding technique & Parameters & Values  \\
NavDEN overbound & \makecell[c]{$\Delta$\\ $\tilde{x}_{max}$\\ $\tilde{x}_{min}$\\ $\tilde{B}$\\ $\tilde{C}$ \\ $k_{tr}$ \\ $k_{max}$ \\ $k_{min}$ \\ $k_{bias}$ } & \makecell[c]{0.2m \\ 42 \\ -42 \\ 50 \\ 130 \\ 8 \\ 32 \\ -33 \\ 4} \\ 
Two-step Gaussian overbound & $(\mu_{2s}, \sigma_{2s})$ & (4.50m, 3.54m) \\
Cauchy-Gaussian overbound & \makecell[c]{CGCM component $(m_o, \lambda_o)$ \\ Gaussian component $(\mu_o, \sigma_o)$} & \makecell[c]{(0.05m, 0.51m) \\ (5.00m, 1.71m)}
\end{tblr}}
\end{table}

{Figure \ref{fig: real nsu2 ovb}b displays the left and right halves of NavDEN overbound and presents the two-step Gaussian overbound and Cauchy-Gaussian overbound in the form of single-CDF analog defined by Equation \eqref{equ: single-cdf analog}. As can be seen, each half of NavDEN overbound is constructed using a discrete model, while that of the other two overbounding methods is formulated to be continuous. The two-step Gaussian overbound is visibly the most conservative of the three methods. The Cauchy-Gaussian overbound has the tightest bounding performance in the central region (absolute error within 2m), and the inset plot confirms it successfully bounds the empirical bias. Notably, both the two-step overbound and the Cauchy-Gaussian overbound include a Gaussian component.} However, with the assistance of the Cauchy distribution in the core region, the CGCM component in the proposed overbound generates a significantly smaller location parameter ($m_o=$ 0.05m) than the two-step Gaussian overbound ($\mu_{2s}=$ 4.50m). This small location parameter $m_o$ enables the resultant Cauchy-Gaussian overbound to yield considerably tighter bounding in the core region, as illustrated in the analog single-CDF overbounds of Figure \ref{fig: real nsu2 ovb}b.

{To analyze the bounding performance at tail regions, we plot the Log CDFs in Figure \ref{fig: real nsu2 ovb}c and Log CCDFs in Figure \ref{fig: real nsu2 ovb}d to demonstrate the left and right tails of the overbounding distributions, respectively. Both figures demonstrate the impressively tight bounding of the NavDEN model for absolute errors between 2m and 11m. However, it becomes less sharp than the proposed method in the far tails (absolute error beyond 12m) due to its heavy-tailed decay. Additionally, while both the two-step Gaussian overbound and the Cauchy-Gaussian overbound use a Gaussian profile at the far-tail region, the proposed method produces a sharper bound, as depicted in both Figure \ref{fig: real nsu2 ovb}c and Figure \ref{fig: real nsu2 ovb}d. This is because the Gaussian component of the Cauchy-Gaussian overbound has a smaller scale parameter ($\sigma_o=$ 1.71m) than the two-step Gaussian overbound ($\sigma_{2s}=$ 3.54m).}

{An additional analysis is conducted to compare the position-domain bounding performance of the different overbounding methods. Figure \ref{fig: real nsu2 pl} shows the time series of VPEs, where the mean value is 0.41m and the maximum reaches 2.89m. In each epoch, we calculate the VPL of the NavDEN, two-step Gaussian, and Cauchy-Gaussian overbounds via Equation \eqref{equ: VPL}, and plot the results in the same figure. Notably, the VPL is calculated based on the worst-case DGNSS error distribution, which is given by the error for elevation angles observed from 30$-$35$^{\circ}$. Figure \ref{fig: real nsu2 pl} shows that all three methods produce VPL that successfully bound the VPE in each epoch. The Cauchy-Gaussian overbound consistently yields the lowest VPLs due to its advantageous fit to the empirical error distribution at both core and tail regions. The two-step Gaussian overbound produces the highest VPLs, while the NavDEN VPLs are at a moderate level. Table  \ref{tab: real nsu2 pl improve} quantifies the percentage of reduction in VPL achieved by the Cauchy-Gaussian overbound compared to the other two overbounds. On average, the VPL is reduced by 21.07\% compared to the NavDEN overbound and by 47.66\% compared to the two-step Gaussian overbound. The VPL reduction is consistent across all metrics, with a maximum percentage reduction of nearly 50\% against the two-step Gaussian overbound. These results indicate that the proposed method can effectively reduce the VPL for n.s.u errors without compromising integrity.}


\begin{figure}[t]
    \centering
    \includegraphics[width = 0.95\textwidth]{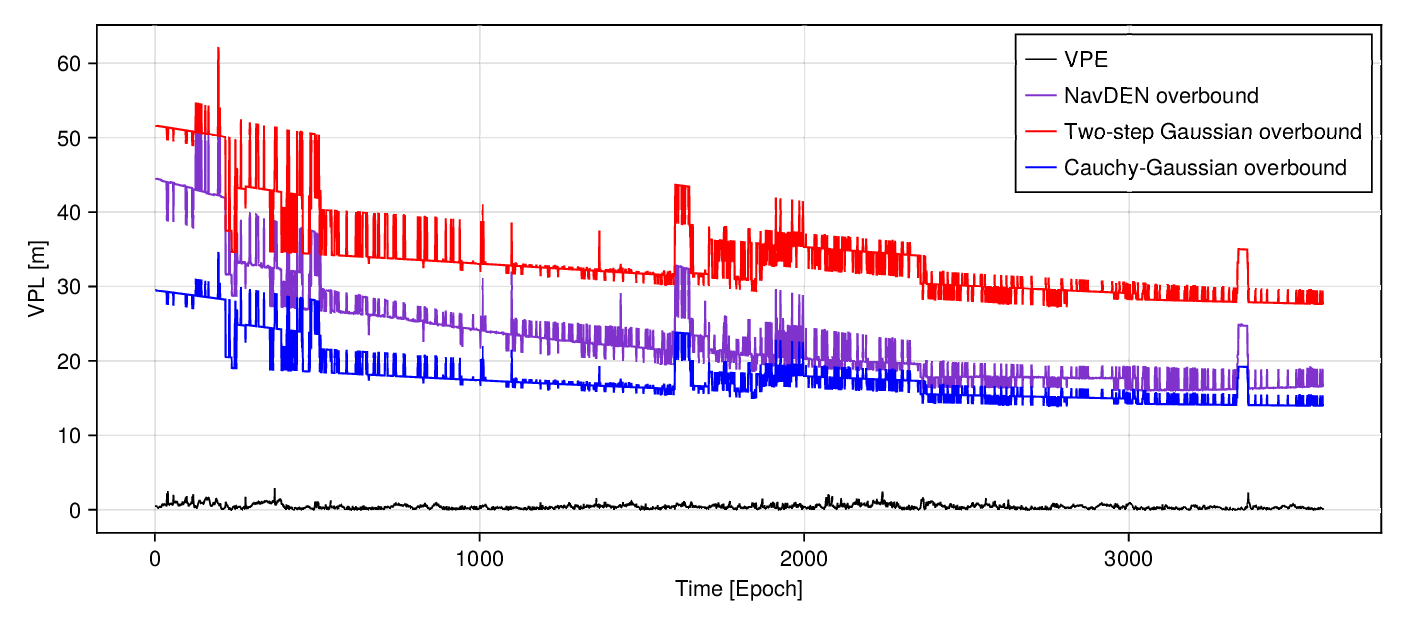}
    \caption{{VPL of the NavDEN, two-step Gaussian and Cauchy-Gaussian overbounding methods, with $P_{HMI} = 1\times 10^{-9}$. The VPEs are also plotted for reference.}}
    \label{fig: real nsu2 pl} 
\end{figure}


\begin{table}[htb]
 \caption{{Reduction in VPL by the Cauchy-Gaussian overbound compared to the two-step Gaussian overbound in the urban dataset 2. Q1, Q2, and Q3 represent the 25th, 50th and 75th percentiles, respectively.}}
 \label{tab: real nsu2 pl improve}
{
\begin{tblr}{colspec={X[c]X[c]X[c]X[c]X[c]X[c]X[c]},
width=\textwidth,
row{even} = {white,font=\small},
row{odd} = {bg=black!10,font=\small},
row{1} = {bg=black!20,font=\bfseries\small},
hline{Z} = {1pt,solid,black!60},
rowsep=3pt
}
Percentage of reduction & Average & Maximum  & Minimum & Q1 & Q2 & Q3\\
Compared to NavDEN overbound & 21.07\% & 40.14\% & 21.07\% & 14.31\% & 21.11\% & 27.36\% \\
Compared to two-step Gaussian overbound & 47.66\% & 49.67\% & 41.50\% & 46.99\% & 48.16\% & 49.14\%
\end{tblr}}
\end{table}

\subsection{Impact of heavy-tailedness}
\label{sec: analyze heavy-tailedness}
This section explores how the heavy-tailedness in the n.s.u. error distribution impacts the positioning bounding performance of the proposed Cauchy-Gaussian overbound. {For comparison, we also evaluated the performance of the two-step Gaussian overbound, a method supported by an open-source toolkit for heuristically bounding errors.} A biased BGMM with the following setting is used to simulate the n.s.u. error distribution:
\begin{equation}
    f_e(x) = p_1 f_G(x;0.1, 1) + (1-p_1) f_G(x;0.1, 10),
    \label{eq:biasBGMM}
\end{equation}
where the location parameter 0.1m in both Gaussian components directly contributes to the bias of 0.1m in the resultant error distribution, and $p_1$ represents the proportion of the first component. Notably, as the $p_1$ increases, the resultant distribution approximates closer to the first component $f_G(x;0.1, 1)$, which is more centralized than the second component $f_G(x;0.1, 10)$. However, the second component with a larger scale parameter (10m) always enables a fraction of data samples to deviate largely away from the center location (i.e., 0.1m). Therefore, the increase in $p_1$ can enlarge the kurtosis (or say, heavy-tailedness) of the BGMM error distribution.

Similar to the setting in Section \ref{sec: exp su errors}, we use MAAST \citep{jan_matlab_2001} to simulate satellite positions every 100 seconds over 24 hours. The receiver and reference locations are listed in Appendix \ref{app 0: LLH, MAAST}. The DGNSS measurements are generated by adding the randomly generated sample from the error distribution in Equation \eqref{eq:biasBGMM} to the true differential range. In the experiment, $p_1$ is set from 0.60 to 0.95 with an increment of 0.05. For each value of $p_1$, we calculate the average VPL of both overbounding methods throughout the time frame.

Figure \ref{fig: analyze bias 1} depicts the average VPL given by the two-step Gaussian and Cauchy-Gaussian overbounding methods, separately. As expected, the proposed overbound generates less conservative bounding than the two-step Gaussian overbound at both core and tail regions, thus yielding lower average VPLs across different settings in $p_1$. Besides, both curves of average VPL calculations show the overall descending trend, with fluctuations induced by several extreme random data samples. The two-step Gaussian overbound generates the average VPL slightly decreasing from 33.80m to 31.84m, when $p_1$ increases from 0.60 to 0.95. A possible reason is that the resultant distribution $f_e$ gradually allocates less proportion to the second component, which has a fatter tail than the first component. Correspondingly, the scale parameters of the Gaussian-based overbound are slightly reduced, which enables tighter bounding and thus lower VPL calculations. In contrast, the curve of Cauchy-Gaussian overbound decreases more significantly from 31.60m to 24.22m as $p_1$ increases. This can be explained by the fact that the empirical distribution becomes more heavy-tailed with the increase of $p_1$ and possesses longer tails and a sharper core region. Apart from the benefit from a lower scale parameter given by the optimal paired Gaussian, the Cauchy-Gaussian overbound can advantageously bound the core region using a smaller location parameter by leveraging the optimal paired CGCM, thus resulting in smaller average VPLs. Figure \ref{fig: analyze bias 2} shows the percentage of reduction in average VPLs by the Cauchy-Gaussian overbound compared to the two-step Gaussian overbound against different settings of $p_1$. As can be seen, the curve rises gradually from 6.53\% to 23.94\% when $p_1$ reaches 0.95. The result also potentially reveals the strength of the proposed method to less conservatively bound increasingly heavy-tailed distributions, compared to the two-step Gaussian overbound.

\begin{figure}[t]
        \centering
        \begin{subfigure}{0.49\textwidth} 
            \includegraphics[width=0.95\textwidth]{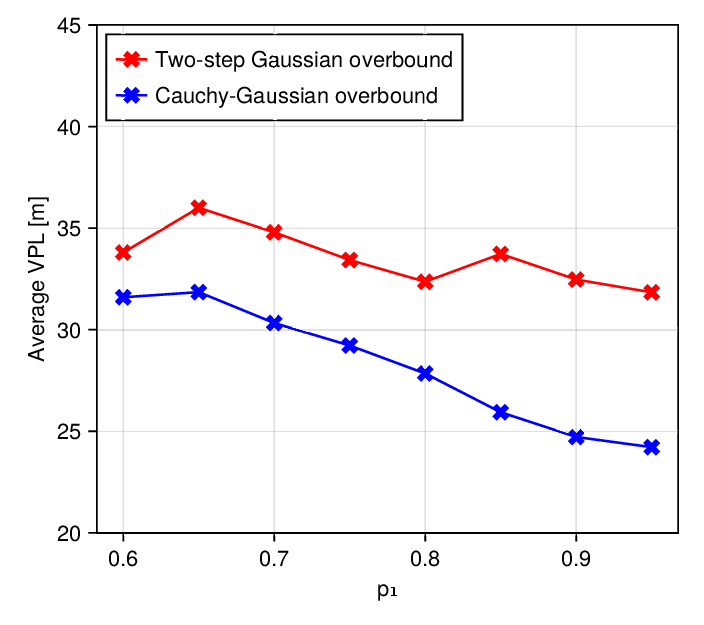}
            \caption{\centering}
            \label{fig: analyze bias 1}
        \end{subfigure}
        \hfill 
        \begin{subfigure}{0.49\textwidth} 
            \includegraphics[width=0.95\textwidth]{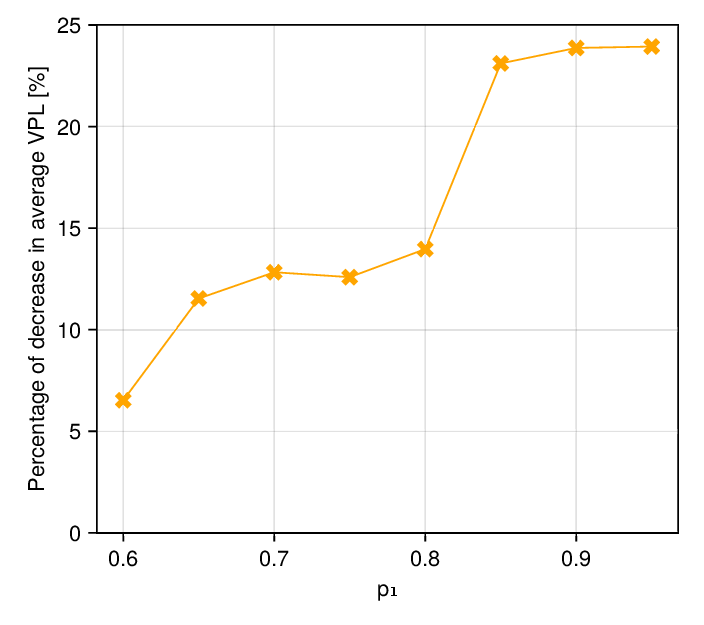}
            \caption{\centering}
            \label{fig: analyze bias 2}
        \end{subfigure}
        \caption{Results for different biased and heavy-tailed BGMM: (a) average VPLs given by the two overbounding methods; (b) the percentage of reduction in average VPL yielded by Cauchy-Gaussian overbound, compared to the two-step Gaussian overbound.}
        \label{fig: analyze bias}
\end{figure}

\section{Conclusion and Future Work}
\label{sec: conclusion and outlook}
{This work employs the simply parameterized Cauchy distribution to characterize the heavy-tailed properties of empirical errors. Building on this, the proposed Cauchy-Gaussian overbound is designed to tightly bound both symmetric unimodal (s.u.) and not symmetric unimodal (n.s.u.) heavy-tailed error distributions.} For both error types, a systematic three-step procedure is developed to determine the optimal parameters of the overbound. The proposed overbound offers a single-CDF bound for s.u. distributions and paired bounds for n.s.u. error distributions. For s.u. errors, the Cauchy-Gaussian overbound employs zero-mean Cauchy overbound at the core region and zero-mean Gaussian overbound at the tail regions. In this process, the determined overbounding relationship between centrally aligned Cauchy and Gaussian distributions is applied to reduce the computational load when searching for the optimal scale parameter of the Cauchy overbound. Subsequently, a tangential transition from Cauchy to Gaussian is designed to maintain the overbounding preservation through convolution. For n.s.u. errors, the empirical distributions are bounded with an optimal pair of half-Cauchy-half-Gaussian CGCM, which can generate significantly smaller location parameters than pure Cauchy or pure Gaussian distribution. An optimal paired Gaussian overbound is determined subsequently. The paired CGCM overbound and paired Gaussian overbound are then synthesized to produce piecewise paired overbounding distributions to bound the original n.s.u errors tightly. The bounding performance of the Cauchy-Gaussian overbound is compared with the single-CDF Gaussian overbound for s.u. errors through a simulated DGNSS dataset. {For n.s.u. errors, the performance is evaluated in comparison with the NavDEN overbound and two-step Gaussian overbound through a real DGNSS dataset collected in the Hong Kong urban environment.} For both error types, the proposed Cauchy-Gaussian overbound provides overall tighter bounds than the baseline methods. {In the position domain, the Cauchy-Gaussian method reduces the VPL by approximately 15\% on average for s.u. errors. For n.s.u. errors, the average VPL reduction is even more significant, reaching 21\% compared to the NavDEN overbound and over 47\% compared to the two-step Gaussian overbound.} Besides, as heavy-tailedness increases in the error distributions, the Cauchy-Gaussian overbound yields a more significant reduction in average VPL.

Nevertheless, the Cauchy-Gaussian overbound has more than three parameters for both s.u. and n.s.u. empirical distributions, which pose challenges for broadcasting overbound parameters in augmentation systems with limited bandwidth channels. Possible solutions involve developing a quantitative relationship between the Gaussian component and the Cauchy-based component (or the CGCM component), such that only one pair of location and scale parameters is needed in overbound computation. Besides, the optimal location and scale parameters of the paired CGCM overbound and paired Gaussian overbound can be impacted by the objective function. It is thus worthwhile to analyze the bounding performance yielded by the current design compared to other cost forms, such as the discretized Jensen-Shannon divergence. {To more conveniently compute tight bounds against heavy-tailed n.s.u. errors, extended work could utilize a two-step architecture \citep{blanch_gaussian_2018} where the Cauchy-Gaussian model is used to overbound an intermediate s.u. envelope generated from empirical n.s.u. errors.} Future directions may also include applying the proposed Cauchy-Gaussian overbound to fault detection algorithms in scenarios of both s.u. and n.s.u. measurement errors.

\newpage
\appendix
\section{Coordinates of receivers and reference stations}
\label{app 0: LLH, MAAST}
\begin{table}[htb]
 \caption{Locations (longitude, latitude, height) of receivers and reference stations in the simulation and urban dataset. It is noted that the simulation in Sections \ref{sec: exp su errors} and \ref{sec: analyze heavy-tailedness} share the same pair of receiver and reference station locations}
 \label{atab: locations}
\begin{tblr}{colspec={X[c]X[c]X[c]},
width=\textwidth,
row{even} = {white,font=\small},
row{odd} = {bg=black!10,font=\small},
row{1} = {bg=black!20,font=\bfseries\small},
hline{Z} = {1pt,solid,black!60},
rowsep=3pt
}
Settings & Simulation & Urban dataset validation  \\
Receiver location & (60.00 deg, 154.10 deg, 1.08 $\times $10$^{-7}$ m) & (22.30 deg, 114.18 deg, 2.08 m)   \\
Reference station location & (60.00 deg, 154.00 deg, 1.06 $\times $10$^{-7}$m) & (22.32 deg, 114.14 deg, 20.24 m)  \\
Distance & 5.58 km & 4.73 km
\end{tblr}
\end{table}

\section{Sufficient and necessary conditions of Cauchy overbound on a centrally-aligned Gaussian distribution}
\label{app 1: cauchy ovb on a known gaussian}
The PDF and CDF of the standard Gaussian distribution with location parameter $\mu$ and scale parameter $\sigma$ are shown as:
\begin{align}
    f_G(x; \mu, \sigma) &= \frac{1}{\sigma\sqrt{2\pi}}\exp\left({-\frac{1}{2}\left(\frac{x-\mu}{\sigma}\right)^2}\right), \label{aequ: gaussian pdf}\\
    F_{G}(x; \mu, \sigma) &= \frac{1}{2} \left(1+ \text{erf}\left(\frac{x-\mu}{\sigma\sqrt{2}}\right) \right), \label{aequ: gaussian cdf}
\end{align}
where $\text{erf}(\cdot)$ stands for the error function. Similarly, the PDF and CDF of the Cauchy distribution with location parameter $m$ and scale parameter $\lambda$ are given by:
\begin{align}
    f_C(x; m, \lambda)&=\frac{1}{\lambda \pi}\frac{1}{1+(\frac{x-m}{\lambda})^2},  \label{aequ: cauchy pdf}\\
    F_C(x; m, \lambda)&=\frac{1}{2}+\frac{1}{\pi}\arctan \left(\frac{x-m}{\lambda}\right). \label{aequ: cauchy cdf}
\end{align}

\begin{theorem} \label{thm: cauchy ovb gaussian}
    For a Gaussian model $N(\mu,\sigma)$ and a Cauchy model $C(m, \lambda)$, if their medians are aligned (i.e., $\mu=m=M_0$, where $M_0$ denotes the known value of the median), then the sufficient and necessary condition of $C(M_0,\lambda)$ overbounding $N(M_0,\sigma)$ is $\lambda \geq \sqrt{\frac{2}{\pi}}\sigma$. 
\end{theorem}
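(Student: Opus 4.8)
The plan is to recast the overbounding condition $C(M_0,\lambda)\succeq N(M_0,\sigma)$ entirely in terms of a single-variable function and then analyze that function's sign. Without loss of generality I would shift coordinates so that $M_0 = 0$, exploiting translation invariance of both families; the claim then concerns the standard-location Cauchy and Gaussian. Recall the s.u. overbounding definition (Equations \eqref{equ: sym cdf x<0}--\eqref{equ: sym cdf x>0}): since both CDFs equal $\tfrac12$ at $x=0$ and both distributions are symmetric about $0$, the two inequalities $F_C \ge F_G$ for $x\le 0$ and $F_C \le F_G$ for $x\ge 0$ are equivalent by symmetry to the single statement $F_C(x;0,\lambda) \ge F_G(x;0,\sigma)$ for all $x \le 0$. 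So define, for $x \le 0$,
\begin{equation}
    g(x) = F_C(x;0,\lambda) - F_G(x;0,\sigma) = \frac{1}{\pi}\arctan\!\left(\frac{x}{\lambda}\right) - \frac{1}{2}\operatorname{erf}\!\left(\frac{x}{\sigma\sqrt{2}}\right),
\end{equation}
and the task reduces to finding the exact set of $\lambda$ for which $g(x)\ge 0$ on $(-\infty,0]$, with $g(0)=0$ and $g(-\infty)=0$.

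First I would handle the necessity direction, which is the cleaner half: since $g(0)=0$, a necessary condition for $g\ge 0$ just to the left of the origin is $g'(0)\le 0$. Computing derivatives, $f_C(0;0,\lambda) = \tfrac{1}{\pi\lambda}$ and $f_G(0;0,\sigma)=\tfrac{1}{\sigma\sqrt{2\pi}}$, so $g'(0) = \tfrac{1}{\pi\lambda} - \tfrac{1}{\sigma\sqrt{2\pi}} \le 0$ rearranges exactly to $\lambda \ge \sqrt{2/\pi}\,\sigma$. That pins down the threshold and shows no smaller $\lambda$ can work. For sufficiency I would then show that $\lambda \ge \sqrt{2/\pi}\,\sigma$ forces $g(x) \ge 0$ on all of $(-\infty,0]$. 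The natural approach is a monotonicity/convexity argument on $g$: show $g$ has at most one interior critical point on $(-\infty,0)$ by analyzing $g'(x) = f_C(x) - f_G(x)$, i.e. comparing the Cauchy and Gaussian densities. The density difference $f_C - f_G$ changes sign a controlled number of times (Gaussian dominates near the center for the critical $\lambda$, Cauchy dominates in the tail), so $g$ is first decreasing then increasing on $(-\infty,0)$ — or monotone — and since it vanishes at both endpoints $x=0$ and $x\to-\infty$, it stays nonnegative in between. I expect this sign-count for $f_C - f_G$ to be the technical heart; it amounts to showing the equation $\tfrac{1}{\pi\lambda}\,\tfrac{1}{1+(x/\lambda)^2} = \tfrac{1}{\sigma\sqrt{2\pi}} e^{-x^2/(2\sigma^2)}$ has exactly two roots (equivalently, $\log$ of one side minus the other, a function like $x^2/(2\sigma^2) - \log(1+x^2/\lambda^2) + \text{const}$, is unimodal in $x^2$), which is a routine but careful calculus exercise.

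The main obstacle is therefore establishing the exact sign pattern of $f_C - f_G$ under the boundary-case scaling $\lambda = \sqrt{2/\pi}\,\sigma$, and then arguing monotonically in $\lambda$ that increasing $\lambda$ beyond the threshold only helps (since $\partial F_C/\partial \lambda < 0$ for $x<0$, enlarging $\lambda$ raises $F_C(x)$ for negative $x$, so if the inequality holds at the threshold it holds for all larger $\lambda$). Combining the $\lambda$-monotonicity with the threshold analysis at equality gives sufficiency for the whole range $\lambda \ge \sqrt{2/\pi}\,\sigma$, and the $g'(0)$ computation gives necessity, completing both directions. An alternative to the density sign-count, if it proves cleaner, is a direct substitution $u = x^2$ reducing the two-root claim to strict concavity of a single elementary function of $u$, which I would fall back on.
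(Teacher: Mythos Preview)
Your approach is essentially the paper's: both reduce to studying the CDF difference on one half-line, use the derivative at the center for necessity, and for sufficiency establish that the densities $f_C$ and $f_G$ cross exactly once on $(-\infty,0)$ so that the CDF difference has a single interior extremum between its two zero endpoints. The paper proves the single-crossing statement as a separate Lemma by analyzing $f_G/f_C - 1$ and its derivative; your suggestion of taking logs and substituting $u=x^2$ is an equivalent route. Your necessity argument via $g'(0)\le 0$ is in fact cleaner than the paper's contradiction phrasing of the same computation, and the monotonicity-in-$\lambda$ reduction to the threshold case is a nice simplification the paper does not use.

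There is, however, a directional slip you should fix. You write that $g$ is ``first decreasing then increasing on $(-\infty,0)$'' and therefore stays nonnegative between its two zeros; that implication is backwards (a function vanishing at both ends that first decreases then increases is $\le 0$ in between). Your own density comparison --- Cauchy dominates in the tail, Gaussian near the center --- gives $g'=f_C-f_G>0$ far out and $g'\le 0$ near $0$, so $g$ is first \emph{increasing} then \emph{decreasing} on $(-\infty,0)$, hence has a single interior maximum and is $\ge 0$ throughout. Relatedly, for $x<0$ one has $\partial F_C/\partial\lambda>0$ (not $<0$), though your verbal conclusion ``enlarging $\lambda$ raises $F_C(x)$ for negative $x$'' is correct. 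With these sign corrections the argument goes through and matches the paper's.
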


It should be highlighted that the overbounding properties of symmetric unimodal (s.u.) distributions in Section \ref{sec: sym cdf ovb} require both the error and overbounding distributions to be zero-mean (i.e., $\mu=m=0$) \citep{decleene_defining_2000}. Here, an extended definition of overbounding relationship between centrally-aligned Cauchy ($F_C$) and Gaussian ($F_G$) distributions is provided:
\begin{align}
    F_G(x; M_0, \sigma) \leq  F_C(x; M_0, \lambda)\quad\forall x \leq M_0, \label{aequ: sym cdf x<0}\\
    F_G(x; M_0, \sigma) \geq   F_C(x; M_0, \lambda)\quad\forall x > M_0, \label{aequ: sym cdf x>0}
\end{align}
where $M_0 \in \mathbb{R}$. \par

For the convenience of proving the Theorem, a subtraction function $t(x)$ between the CDFs is introduced as:
\begin{equation}
    t(x) = F_G(x; M_0, \sigma)-F_C(x; M_0, \lambda),
\end{equation}
then its derivative function further gives the subtraction relationship between the PDFs:
\begin{equation}
    t'(x) = f_G(x; M_0, \sigma)-f_C(x; M_0, \lambda).
\end{equation}
If the Cauchy distribution overbounds the Gaussian distribution, their CDFs will coincide at $(M_0,1/2)$, which means
\begin{equation}
     t(M_0) =0, \label{aequ: t(M0)=0}
\end{equation}
and the definition of Equation \eqref{aequ: sym cdf x<0} and \eqref{aequ: sym cdf x>0} further gives:
\begin{align}
    t(x) \leq 0 \quad \forall x \leq M_0, \label{aequ: sym cdf x<}\\
    t(x) > 0 \quad \forall x > M_0. \label{aequ: sym cdf x>} 
\end{align}
Since the Cauchy PDF has super-exponential tails, there is: 
\begin{align}
    t'(-\infty)&< 0, \label{aequ: t'(-inf)<0}\\
     t'(+\infty)&< 0. \label{aequ: t'(+inf)<0}
\end{align}
Due to the symmetry of both Cauchy and Gaussian distributions, the following paragraphs will mainly analyze the case when $x\leq M_0$ because similar procedures can be expanded to $x> M_0$. As such, the proofs of both sufficiency and necessity for the Theorem will be provided as follows.

\begin{proof}[\textbf{Proof of sufficiency}] \label{sufficiency}  (if $\lambda \geq \sqrt{\frac{2}{\pi}}\sigma$, then $C(M_0,\lambda)$ overbounds $N(M_0,\sigma)$.)

In the domain of $x\leq M_0$, according to the properties listed from Equation \eqref{aequ: t(M0)=0} to \eqref{aequ: sym cdf x>} and the heavy-tailedness of Cauchy distribution mentioned in Equation \eqref{aequ: t'(-inf)<0} and \eqref{aequ: t'(+inf)<0}, the conclusion \enquote{$C(M_0,\lambda)$ overbounds $N(M_0,\sigma)$} is equivalent to \enquote{$\nexists x^{*}\in(-\infty, M_0)$ such that $t(x^{*})>0$}, and we only need to disprove the opposite statement, which specifically gives \enquote{$\exists x^{*}\in(-\infty, M_0)$ such that $t(x^{*})>0$}.

If $\lambda \geq \sqrt{\frac{2}{\pi}}\sigma$, it can be proved that there is one and only one intersection between $f_C(x)$ and $f_G(x)$ when $x< M_0$ (details can be seen in the Lemma). There exists $x_0\in (-\infty, M_0)$ such that $t'(x_0)=0$, $t'(x_0^+)>0 $, and $ t'(x_0^-)<0$. In this case, when $x\in(-\infty,x_0), t'(x)<0$, and $ t(x)$ is decreasing; when  $x\in(x_0,M_0), t'(x)>0$, and $ t(x)$ is increasing.  Besides, it is inferred that $x^*>x_0$ (otherwise the decreasing $t(x)$ in $ (-\infty,x_0)$ gives $t(x^*)<t(-\infty)<0$, which is contradictory to the assumption $t(x^*)=0$). Since $t(x)$ is increasing in $(x_0, M_0)$, and that $x^*\in(x_0,M_0)$, there will be $t(M_0)>t(x^*)=0$, which is contradictory to the overbounding properties $t(M_0)=0$. 

Therefore, the opposite conclusion \enquote{$\exists x^{*}\in(-\infty, M_0)$ such that $t(x^{*})=0$} is untenable, which proves that \enquote{$\nexists x^{*}\in(-\infty, M_0)$ such that $t(x^{*})>0$} and $C(M_0,\lambda)$ overbounds $N(M_0,\sigma)$. This contradiction can likewise be extended to the case when $x > M_0$. The sufficiency is proved.
\end{proof}

\begin{lemma} \label{lemma1}
    If $\lambda \geq \sqrt{\frac{2}{\pi}}\sigma$, then $f_C(x;M_0,\lambda)$ and $f_G(x;M_0;\sigma)$ have one and only one intersection when $x< M_0$.
\end{lemma}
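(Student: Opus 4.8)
The plan is to reduce the claim to counting the zeros of a single scalar function and then pin down that function's shape. First I would translate coordinates so that $M_0 = 0$; since $f_C(\cdot;0,\lambda)$ and $f_G(\cdot;0,\sigma)$ are both even and everywhere positive, the intersections with $x < 0$ are in one-to-one correspondence, via $u = x^2$, with the zeros in $(0,\infty)$ of
\[
\phi(u) \;=\; \ln f_G(\sqrt{u};0,\sigma) - \ln f_C(\sqrt{u};0,\lambda) \;=\; \ln\frac{\lambda\sqrt{\pi}}{\sigma\sqrt{2}} \;-\; \frac{u}{2\sigma^{2}} \;+\; \ln\!\left(1+\frac{u}{\lambda^{2}}\right).
\]
Hence it suffices to prove that $\phi$ has exactly one zero on $(0,\infty)$ whenever $\lambda \ge \sqrt{2/\pi}\,\sigma$; note that the boundary point $u=0$ corresponds to $x = M_0$ and is deliberately excluded.

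Next I would determine the shape of $\phi$ from its derivative $\phi'(u) = -\frac{1}{2\sigma^{2}} + \frac{1}{\lambda^{2}+u}$, which is strictly decreasing in $u$ and vanishes only at $u^{\star} = 2\sigma^{2}-\lambda^{2}$. So $\phi$ is strictly unimodal: strictly decreasing on $[0,\infty)$ when $u^{\star}\le 0$, and strictly increasing on $[0,u^{\star}]$ then strictly decreasing on $[u^{\star},\infty)$ when $u^{\star}>0$. I would then record the two boundary facts that fix the count. As $u\to\infty$ the linear term dominates the logarithm, so $\phi(u)\to -\infty$. And $\phi(0) = \ln\frac{\lambda\sqrt{\pi}}{\sigma\sqrt{2}}$, which is $\ge 0$ exactly when $\lambda \ge \sqrt{2/\pi}\,\sigma$, with equality at the threshold; this is precisely where the constant $\sqrt{2/\pi}$ in the statement originates.

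To finish, I would combine unimodality with the boundary data. On the (possibly empty) increasing branch $[0,u^{\star}]$ one has $\phi \ge \phi(0) \ge 0$, so every zero of $\phi$ lies on its terminal strictly decreasing branch, along which $\phi$ descends from a value $\ge 0$ to $-\infty$ and therefore changes sign exactly once, with $\phi'<0$ at that crossing; when $u^{\star}\le 0$ the same conclusion follows because then $\lambda \ge \sqrt{2}\,\sigma$ forces $\phi(0)>0$ strictly. Thus $\phi$ has exactly one zero on $(0,\infty)$, and translating back, $f_C(\cdot;M_0,\lambda)$ and $f_G(\cdot;M_0,\sigma)$ meet at exactly one point with $x < M_0$, as needed for the sufficiency argument above. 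I do not expect a substantive obstacle: the only care required is the bookkeeping at $u=0$ (which must not be counted) and the case split on the sign of $u^{\star}$, after which everything is elementary single-variable calculus.
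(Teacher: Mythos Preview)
Your proof is correct and follows essentially the same approach as the paper: both analyze the ratio of the two densities (you take the log-ratio $\phi$, the paper works with $g = f_G/f_C - 1$), split into the same two cases on the sign of $2\sigma^{2}-\lambda^{2}$, and then use boundary values together with monotonicity/unimodality to count zeros. Your substitution $u=x^{2}$ and passage to the logarithm make the derivative a simple rational function and streamline the bookkeeping slightly, but the underlying logic is identical to the paper's.
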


\begin{proof}
Since $f_G(x)>0$ $\forall x\in \mathbb{R}$ and $f_C(x)>0$ $\forall x\in \mathbb{R}$ and they are both continuous function, we introduce the function
\begin{equation}
  g(x)=\frac{f_G(x)}{f_C(x)} -1 =\frac{\lambda}{\sigma} \sqrt{\frac{\pi}{2}} \exp\left({-\frac{1}{2}\left(\frac{x-M_0}{\sigma}\right)^2}\right) \cdot \left( 1+\left(\frac{x-M_0}{\lambda}\right)^2\right) -1.
\end{equation}
Based on the PDF expressions of Gaussian and Cauchy in Equation \eqref{aequ: gaussian pdf} and \eqref{aequ: cauchy pdf}, we know that
\begin{align}
    f_G(x)&= O(\exp\left({-x^2}\right))\quad as\ |x| \rightarrow \infty, \\
    f_C(x)&= O(x^{-2}) \quad as\ |x| \rightarrow \infty,
\end{align}
where $O(\cdot)$ is the big O notation, then it is inferrable that
  \begin{align}
      g(-\infty) = -1, \\
      g(+\infty) = -1.
  \end{align}
    Denote $u=x-M_0$, then $u\in(-\infty, 0)$, then $f_C(x;M_0,\lambda)$ and $f_G(x;M_0;\sigma)$ have an intersection means that $g(u)=0$ has a solution. The Lemma further changes to \enquote{If $\lambda \geq \sqrt{\frac{2}{\pi}}\sigma$, then $g(u)$ has one and only one zero point in $(-\infty, 0)$}. Besides, the simplified $g(u)$ gives:
\begin{equation}
        g(u) =\frac{\lambda}{\sigma} \sqrt{\frac{\pi}{2}} \exp\left({-\frac{1}{2}\left(\frac{u}{\sigma}\right)^2}\right) \cdot \left( 1+\left(\frac{u}{\lambda}\right)^2\right) -1,
    \end{equation}
and its derivative with respect to $u$ gives:
\begin{align}
    g'(u) & = \frac{\lambda}{\sigma} \sqrt{\frac{\pi}{2}}  \cdot \left( -\frac{u}{\sigma^2} \exp\left({-\frac{1}{2}\left(\frac{u}{\sigma}\right)^2}\right) \cdot \left( 1+\left(\frac{u}{\lambda}\right)^2\right)+\frac{2u}{\lambda^2}  \exp\left({-\frac{1}{2}\left(\frac{u}{\sigma}\right)^2}\right) \right) \nonumber \\
    & = \frac{\lambda}{\sigma} \sqrt{\frac{\pi}{2}} \exp\left({-\frac{1}{2}\left(\frac{u}{\sigma}\right)^2}\right) \cdot \left( -\frac{u}{\sigma^2}-\frac{u^3}{\sigma^2\lambda^2} +\frac{2u}{\lambda^2}\right)  \nonumber \\
    & = \sqrt{\frac{\pi}{2}} \frac{\exp\left({-\frac{1}{2}(\frac{u}{\sigma})^2}\right)}{\sigma^3\lambda}  \cdot \left( -u^3-(\lambda^2 - 2\sigma^2)u \right).
\end{align}
    
    If $\lambda^2 - 2\sigma^2 \geq 0$ which means $\lambda \geq \sqrt{2} \sigma$, $g'(u) > 0$ thus $g(u)$ is increasing in $(-\infty, 0)$. Since $g(u)$ is continuous, $g(u)_{max}=g(0^-)\to g(0)=\frac{\lambda}{\sigma} \sqrt{\frac{\pi}{2}} -1 \geq \sqrt{\pi}-1 > 0$. Combined with $g(-\infty)=-1<0$, $g(u)$ has one and only one zero point on $(-\infty, 0)$. 
    
    If $\lambda^2 - 2\sigma^2 < 0$ which means $\sqrt{\frac{2}{\pi}} \sigma \leq \lambda < \sqrt{2} \sigma$, when $u\in(-\sqrt{2\sigma^2-\lambda^2}, 0)$, $g'(u)<0$ thus $g(u)$ is decreasing in this domain, $g(u)_{min}=g(0^-) \to g(0) = \frac{\lambda}{\sigma} \sqrt{\frac{\pi}{2}} -1 \geq  0$, $g(u)$ doesn't have any zero point on $(-\sqrt{2\sigma^2-\lambda^2}, 0)$. When $u\in(-\infty, -\sqrt{2\sigma^2-\lambda^2})$, $g'(u)>0$ thus $g(u)$ is increasing in this domain, $g(u)_{max}=g(-\sqrt{2\sigma^2-\lambda^2}) > g(0) \geq 0$. Given that $g(x)$ is continuous and $g(-\infty)=-1<0$, $g(u)$ has only one zero point on $(-\infty, -\sqrt{2\sigma^2-\lambda^2})$.
    
    Combining the cases of both $\lambda^2 - 2\sigma^2 \geq 0$ and $\lambda^2 - 2\sigma^2 < 0$, we can conclude that $g(u)$ has one and only one zero point in $(-\infty, 0)$. This proves that $f_C(x;M_0,\lambda)$ and $f_G(x;M_0;\sigma)$ have one and only one intersection in $(-\infty, M_0)$.
\end{proof}

\begin{proof}[\textbf{Proof of necessity}] \label{necessity}  (if $C(M_0,\lambda)$ overbounds $N(M_0,\sigma)$, then $\lambda \geq \sqrt{\frac{2}{\pi}}\sigma$.)\par
    Similarly, we assume the opposite conclusion $\lambda < \sqrt{\frac{2}{\pi}}\sigma$ is true. It is known that the maximum of PDF for both Cauchy and Gaussian distributions happens when $x=M_0$. Therefore, $f_C(x)_{max}=\frac{1}{\lambda \pi}>\frac{1}{\sigma\sqrt{2\pi}}=f_G(x)_{max}$. 
    
    When $x\leq M_0$, there will always be a positive $\epsilon$ such that when $ x\in (M_0-\epsilon, M_0)$, $ t'(x)=f_G(x; M_0, \sigma)-f_C(x; M_0, \lambda)<0$ thus $t(x)$ is decreasing in this domain. Based on the overbounding property in Equation \eqref{aequ: sym cdf x<}, it is inferrable that $ t(M_0-\epsilon) < 0.$ Now that $M_0-\epsilon$ and that $t(x)$ is decreasing, there is $t(M_0)<t(M_0-\epsilon)<0$, which is contradictory to the premise  $t(M_0)=0$ mentioned in Equation \eqref{aequ: t(M0)=0}. 
    
    The contradiction can likewise be extended to the case when $x > M_0$, thus the assumption $\lambda < \sqrt{\frac{2}{\pi}}\sigma$ is untenable and it can be concluded that $\lambda < \sqrt{\frac{2}{\pi}}\sigma$. The necessity is proved.
\end{proof}

{
\section{Proof of existence of a tangential line between Cauchy and Gaussian overbounds}
\label{app 1.5: tangent existence}
In this study, the tangential link from Cauchy overbound to Gaussian overbound is always possible as long as the Cauchy CDF is not an overbound for the Gaussian CDF. The statement contains two parts, including: 1) sufficiency: ``if the Cauchy CDF is not an overbound for the Gaussian CDF, then a tangential link exists"; 2) necessity: ``if a tangential link exists; then the Cauchy CDF is not an overbound for the Gaussian CDF". For the sake of notation, we denote the CDFs of the Gaussian and Cauchy overbounds to be $F_G(x)$ and $F_C(x)$, respectively.  Besides, we will focus on the positive half (i.e., when $x>0$), for instance, and similar procedures of proof can be inferred for the negative half (i.e., when $x<0$) due to the symmetry of CDFs. In the sequel, sufficiency and necessity will be proven, respectively, in the case of bounding symmetric and unimodal errors. \par

\begin{figure}[ht]
    \centering
    \begin{subfigure}[b]{0.49\textwidth}
        \centering
        \includegraphics[width=0.95\textwidth]{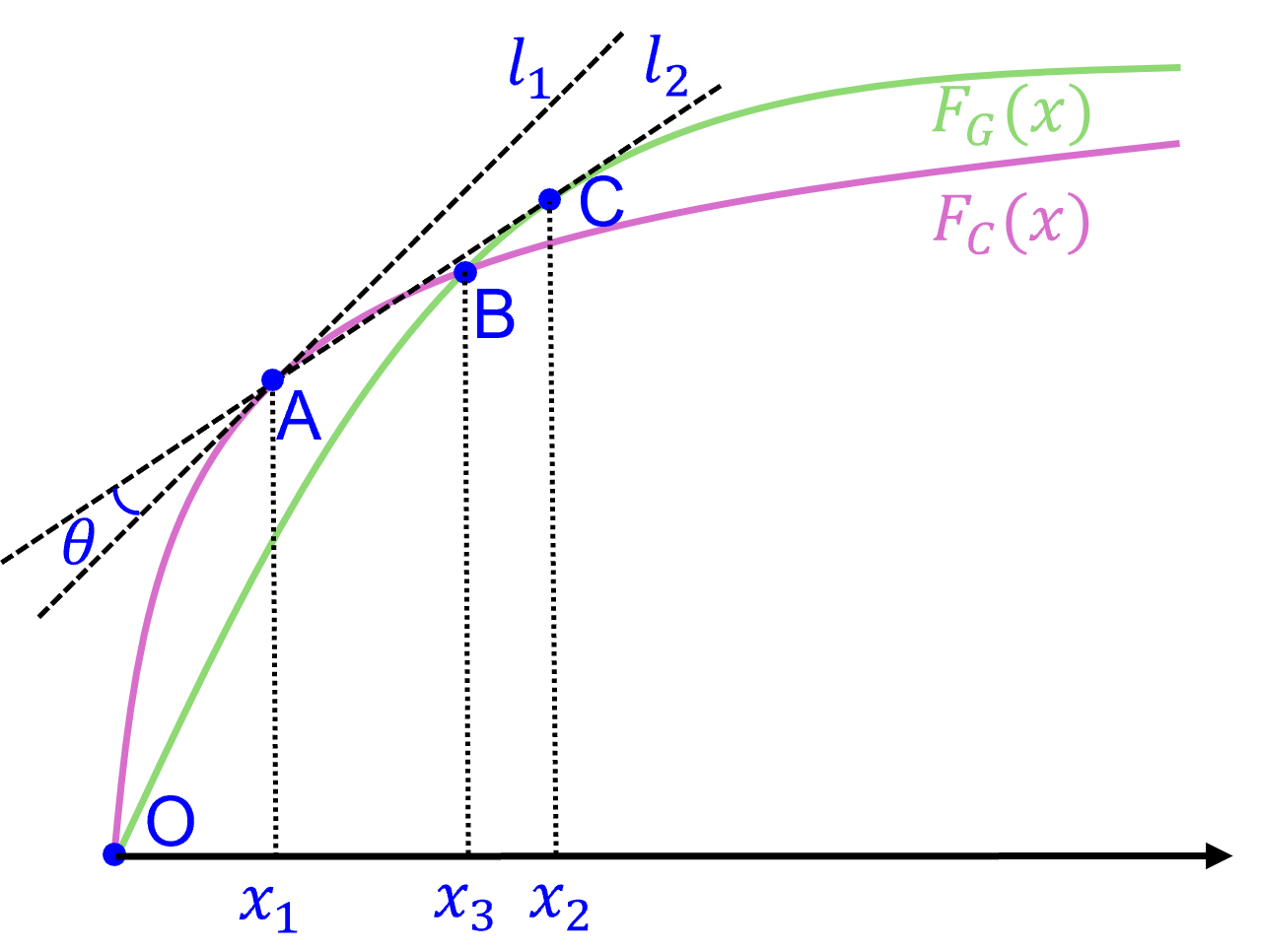}
        \caption{}
    \end{subfigure}
    \hfill
    \begin{subfigure}[b]{0.49\textwidth}
        \centering
        \includegraphics[width=0.95\textwidth]{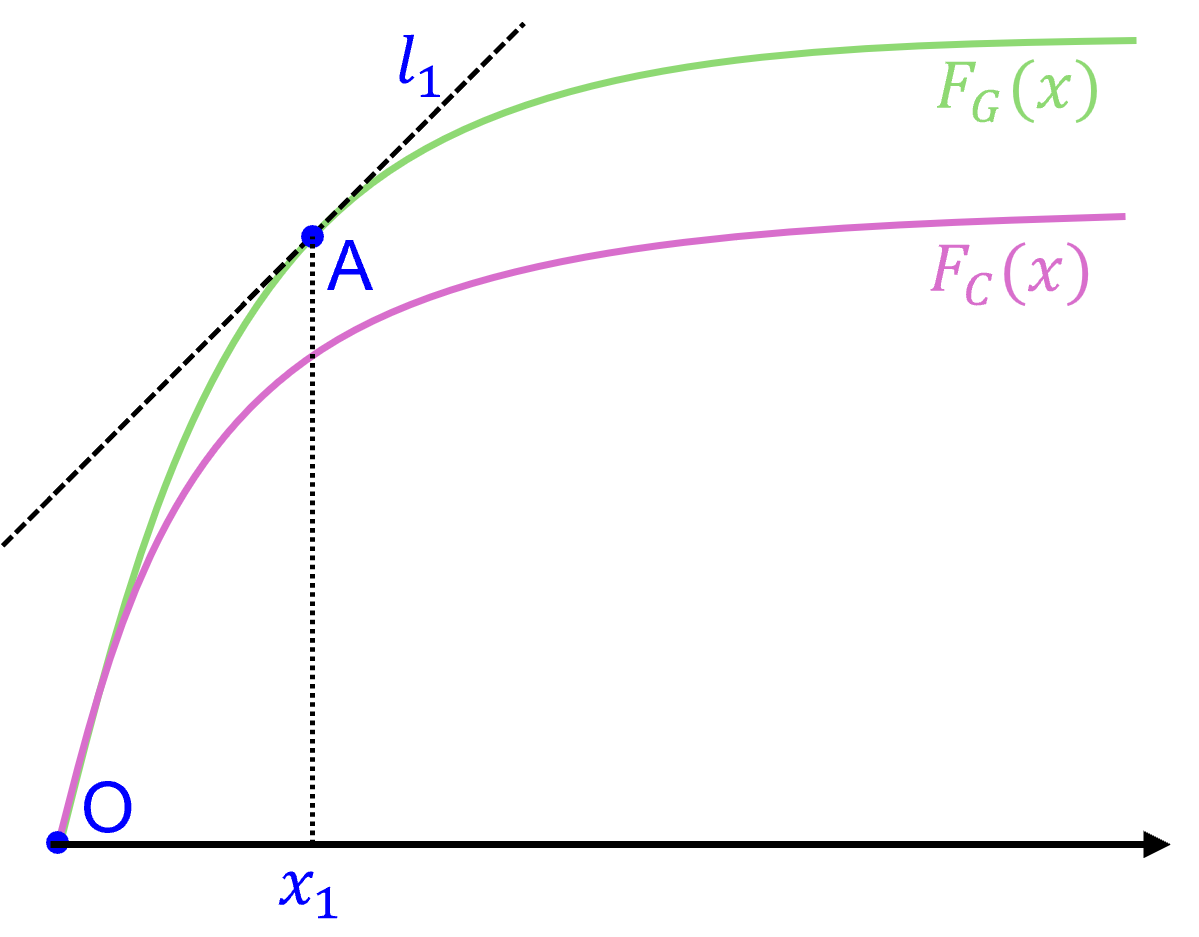}
        \caption{}
    \end{subfigure}
    \caption{Schematics, including (a) for proof of sufficiency and (b) for proof of necessity}
    \label{fig: tangential link}
\end{figure}

\begin{proof}[Proof of sufficiency]
We may first assume the opposite of the conclusion (i.e., a tangential link between Cauchy and Gaussian CDFs does not exist). When $x>0$, both the Cauchy CDF and Gaussian CDF are strictly concave because their second-order derivatives are always negative. Since the hypothesis states that Cauchy CDF is not an overbound for Gaussian CDF, it is inferable that an intersection (denoted as point B in Fig. \ref{fig: tangential link}a) always exists between the two CDFs when $x>0$, such that
\begin{align}
    & F_G(x) < F_C(x),\quad \forall 0<x<x_3, \\
    & F_G(x) > F_C(x),\quad \forall x>x_3. \label{equ: r2_sufficiency_1}
\end{align}
Due to the concavity of CDF, for any $x_1\in(0,x_3)$, passing through point A$(x_1,F_C(x_1))$, a unique tangential line $l_1$ for $F_C(x)$ can always be drawn:
\begin{equation}
    l_1:\quad y=F_C'(x_1)(x-x_1)+F_C(x_1).
\end{equation}
Besides, $F_C(x_1)>F_G(x_1)$, thus point A is above the Gaussian CDF. Passing through point A, a tangential line $l_2$ for the concave function $F_G(x)$ can always be drawn, with a tangent point C $(x_2,F_G(x_2))$: 
\begin{equation}
    l_2:\quad y=F_G'(x_2)(x-x_2)+F_G(x_2).
\end{equation}
Given by the premise that a tangential link between the two CDFs does not exist, we know that $l_1$ and $l_2$ cannot coincide. Equivalently, this gives that the angle $\theta$ formed by $l_1$ (tangent to $F_C(x)$) and $l_2$ (tangent to $F_G(x)$) cannot be 0. We may obtain that
\begin{equation}
    \tan(\theta) = \left|\frac{F_C'(x_1) - F_G'(x_2)}{1+F_C'(x_1) F_G'(x_2)}\right|\neq 0,
\end{equation}
which further gives that 
\begin{equation}
    F_C'(x_1) \neq F_G'(x_2),\quad \forall x_1\in(0,x_3),\forall x_2\in(x_3, \infty). \label{equ: r2_sufficiency_1a}
\end{equation}
Since the strict concavity of the two CDFs, both $F_C'(x)$ and $F_G'(x)$ are decreasing functions when $x>0$. Therefore, 
\begin{align}
     &F_C'(x_1)\in \left(F_C'(x_1)_{\min} ,\ F_C'(x_1)_{\max} \right) = \left(F_C'(x_3),\ F_C'(0) \right), \label{equ: r2_sufficiency_2} \\
      &F_G'(x_2)\in \left(F_G'(x_2)_{\min} ,\ F_G'(x_2)_{\max} \right) = \left(\lim_{x\rightarrow +\infty}F_G'(x),\ F_G'(x_3) \right) =\left(0,\ F_G'(x_3) \right),
\end{align}
where
\begin{equation}
   F_C'(0) > F_G'(0) >F_G'(x_3).
\end{equation}
Besides, it can be further verified that
\begin{equation}
     F_C'(x_3)<F_G'(x_3). \label{equ: r2_sufficiency_3}
\end{equation}
Let's first assume the opposite $F_C'(x_3)\geq F_G'(x_3)$. The intersection condition at $x=x_3$ gives $F_C(x_3)=F_G(x_3)$. There exists a small positive $\epsilon$ such that
\begin{equation}
    F_C(x_3+\epsilon)=F_C(x_3)+\epsilon *F_C'(x_3)=F_G(x_3)+\epsilon *F_C'(x_3) \geq F_G(x_3)+\epsilon *F_G'(x_3)=F_G(x_3+\epsilon),
\end{equation}
which contradicts with the property in Equation \eqref{equ: r2_sufficiency_1}. \par
Combining Equation \eqref{equ: r2_sufficiency_2} to \eqref{equ: r2_sufficiency_3}, we obtain
\begin{equation}
    \min\left(F_G'(x_2)\right)<\min\left(F_C'(x_1)\right) <  \max\left(F_G'(x_2)\right)<\max\left(F_C'(x_1)\right). 
\end{equation}
Therefore, we can conclude that 
\begin{equation}
    F_C'(x_1) = F_G'(x_2),\quad \exists x_1\in(0,x_3),\exists x_2\in(x_3, \infty),
\end{equation}
which contradicts the property in Equation \eqref{equ: r2_sufficiency_1a}. Therefore, our initial assumption must be false, thus the sufficiency is proven.
\end{proof}

\begin{proof}[Proof of necessity]
We may first assume the opposite of the conclusion (i.e., the Cauchy CDF is an overbound for the Gaussian CDF), then 
\begin{equation}
    F_G(x)>F_C(x),\quad \forall x>0. \label{equ: r2_tangential1}
\end{equation}
When $x>0$, both the Cauchy CDF and Gaussian CDF are strictly concave because their second-order derivatives are always negative. As shown in Fig. \ref{fig: tangential link}b, passing through any point A$(x_1, F_G(x_1))$ ($\forall x_1>0$) on Gaussian CDF, there exists a unique tangential line $l_1$ for $F_G(x)$, and the property of concavity gives that
\begin{equation}
    l_1 \geq F_G(x),\quad \forall x>0. \label{equ: r2_tangential2}
\end{equation}
Combining Equation  \eqref{equ: r2_tangential1} and \eqref{equ: r2_tangential2} , we obtain
\begin{equation}
    l_1 >F_C(x),\quad \forall x>0,
\end{equation}
which means line $l_1$ does not have any intersection with Cauchy CDF $F_C(x)$ when $x>0$. This further implies that the line $l_1$ can never be a tangential line for $F_C(x)$, and that the tangential link between Cauchy CDF and Gaussian CDF does not exist, which contradicts with the hypothesis (i.e., tangential link exists). Therefore, our initial assumption must be false, and the necessity is proven.
\end{proof}\par
In conclusion, the tangential link can always exist between Cauchy and Gaussian overbounds, if and only if the Cauchy CDF is not an overbound for the Gaussian CDF. Notably, the latter condition rarely happens unless the empirical error curve is approximately a Gaussian distribution. In that case, (take Figure \ref{fig: tangential link}a for instance), points O and B will be considered coincident, and the proposed Cauchy-Gaussian overbound will become a single-CDF Gaussian overbound. Since the focus of this study is to develop an overbound for heavy-tailed error distributions, we may neglect the extreme case and conclude that the tangential link between CDFs is always mathematically possible.
}

\section{Proof of overbound in the tangential transition region}
\label{app 2: ovb during transition in s.u. cases}
In the following proof, we only consider the case when $x>0$, because the conclusion can be similarly expanded to the domain of $x\leq 0$. 

Assume the CDF of the original error profile is $F_e(x)$ and the general form of tangential transition $T(x)$ in CDF from Cauchy to Gaussian gives:
\begin{equation}
    T(x) = k_m x+b \quad 0<x_1 \leq x \leq x_2,
\end{equation}
where $k_m$ and $b$ mathematically denote the slope and vertical intercept of the line function, respectively; $x_1$ and $x_2$ are the abscissa of the two tangential points (i.e.,  $(x_1, F_C(x_1))$ and $(x_2, F_G(x_2))$), when the transition happens from Cauchy to Gaussian model. According to the definition in Equation \eqref{equ: sym cdf x>0} and properties of the tangential line segment, we have:
\begin{align}
    T(x_1) = F_C(x_1)  &\leq F_e(x_1), \label{aequ: tangential x1}\\
   T(x_2)  = F_G(x_1) &\leq F_e(x_2), \label{aequ: tangential x2}
\end{align}
where $F_C(x)$ and $F_G(x)$ represent the CDFs of Cauchy and Gaussian, respectively.

For a s.u. error, the PDF $f_e(x)$ (i.e., $F_e'(x)$) is monotonically decreasing when $x>0$. For the convenience of proof, a function $m(x)$ is introduced as:
\begin{equation}
    m(x) = F_e(x) - T(x) \quad  0<x_1 \leq x \leq x_2.
\end{equation}
Combining with inequalities (\ref{aequ: tangential x1}) and (\ref{aequ: tangential x2}), we may obtain that:
\begin{align}
    m(x_1)&\geq 0 , \\
    m(x_2)&\geq 0.
\end{align}
The first derivative of $m(x)$ can be derived as:
\begin{equation}
    m'(x) = F_e'(x) - k_m \quad  x_1 \leq x \leq x_2.
\end{equation}

If $k_m \leq x_1 \leq x_2$, there will be $m(x)\leq 0$, thus $m(x)$ is decreasing when $x\in[x_1, x_2]$. The minimum $m(x)_{min}=m(x_2)\geq 0$, which further indicates $F_e(x) \geq T(x)\ \ \forall x\in[x_1, x_2]$.

If $x_1 \leq x_2 \leq k_m$, there will be $m(x)\geq 0$, thus $m(x)$ is increasing when $x\in[x_1, x_2]$. The minimum $m(x)_{min}=m(x_1)\geq 0$, which gives $F_e(x) \geq T(x)\ \ \forall x\in[x_1, x_2]$.

If $x_1< k_m < x_2 $, when $x\in[x_1, k_m]$,  $m(x)\geq 0$ thus $m(x)$ is increasing; when $x\in[k_m, x_2]$,  $m(x)\leq 0$ thus $m(x)$ is decreasing.  It is easy to determine the minimum $m(x)_{min}= \min(m(x_1), m(x_2))\geq 0$, which gives $F_e(x) \geq T(x)\ \ \forall x\in[x_1, x_2]$.

Combining all three scenarios, we can prove that curve $F_e(x)$ is always located above the transition $T(x)$ when $x>0$. Due to the symmetry of CDFs, a similar conclusion can be extended that $F_e(x)$ always lies below the transition $T(x)$ when $x\leq 0$. As such, it indicates that the overbounding properties are guaranteed during the two transition regions (i.e. when $0<x_1<x<x_2$ and $-x_2<x<-x_1\leq0$).

{
\section{Proof of Equation (33)}
\label{app 3: proof of f_VPE 1}
According to Equation \eqref{equ: VPE}, the $i^{th}$ component of VPE can be expressed by
\begin{equation}
    \text{VPE}_i =\mathbf{S}_{3,i}\varepsilon_i,
\end{equation}
then the CDF ($F_{\text{VPE}_i}$) can be derived by
\begin{align}
    F_{\text{VPE}_i}(x)&=\Pr\left(\text{VPE}_i<x\right) = \Pr\left(\varepsilon_i<\frac{x}{|\mathbf{S}_{3,i}|}\right) = \int_{-\infty}^{\frac{x}{|\mathbf{S}_{3,i}|}} f_{\varepsilon_i}(t)dt.
\end{align}
Let $u=|\mathbf{S}_{3,i}|t$, we have $du=|\mathbf{S}_{3,i}|dt$, and further obtain the CDF and PDF ($f_{\text{VPE}_i}$) as:
\begin{align}
    F_{\text{VPE}_i}(x) &=\frac{1}{|\mathbf{S}_{3,i}|}\int_{-\infty}^{x}f_{\varepsilon_i}\left(\frac{u}{|\mathbf{S}_{3,i}|}\right)du, \\
    f_{\text{VPE}_i}(x) & =\frac{1}{|\mathbf{S}_{3,i}|}f_{\varepsilon_i}\left(\frac{x}{|\mathbf{S}_{3,i}|}\right).
\end{align}
The distribution of VPE ($f_{\text{VPE}}$) is the joint distribution of all $f_{\text{VPE}_i}$, which gives  
\begin{equation}
    f_{\text{VPE}}(x)= \Conv_{i=1}^{N}  f_{\text{VPE}_i}(x) =\left( \prod_{i=1}^{N} \frac{1}{|S_{3,i}|}\right)   \left(\Conv_{i=1}^{N}  f_{\bm{\varepsilon}_i}\left(\frac{x}{|S_{3,i}|} \right) \right).
\end{equation}
}

{
\section{NavDEN model}
\label{app 4: navden model}
The Navigation Discrete ENvelope (NavDEN) is a discrete error model designed to provide a conservative bound on GNSS heavy-tailed measurement errors. It features a Gaussian-like core to tightly bound nominal errors and flared, non-Gaussian tails to conservatively account for the higher-than-Gaussian probability of large errors. The model is defined by a set of centrosymmetric left and right bounds on a regularly spaced grid.


The NavDEN overbound includes three regions: left tail ($K_1$), core ($K_2$), and right tail ($K_3$). Each region contains a subset of envelope boundaries noted by integer indices as below. 
\begin{align}
    K_1 &= [k_{min}, -k_{tr}), \\
    K_2 &= [-k_{tr}, k_{tr}], \\
    K_3 &= (k_{tr}, k_{max}],
\end{align}
where $k_{max}$ and $k_{min}$ are the largest and smallest index, respectively; $k_{tr}$ performs as the transition index between the core and tail regions. The normalized left bound, $\tilde{l}_k$, for an envelope with integer index $k$ is defined by the following piecewise function: 
\begin{equation}
    \tilde{l}_k= \left\{
        \begin{array}{lr} 
          \text{floor}\left(\tilde{C}\ln\left(1-\frac{k+k_{tr}}{k_{min}+k_{tr}}\right)-k_{tr}-k_{bias} \right)        &  k\in K_1 \\
         k-k_{bias}                             &  k\in K_2 \\
          \text{floor}\left(\tilde{x}_{max}-k_{bias}-\left(\tilde{x}_{max}-k_{tr}\right)e^{2(k_{tr}-k)/\tilde{B}}\right)        &  k\in K_3
        \end{array},
    \right.
\end{equation}
where parameters with a tilde notation are normalized by the fundamental grid spacing, $\Delta$. The model is defined by several parameters, including the transition index ($k _{tr}$), asymptotes ( 
$x_{min}$, $x_{max}$), curvature parameters ($ \tilde{B}$, $\tilde{C}$ ), and a core offset ($k_{bias}$). The cumulative probability associated with each left bound, expressed in Gaussian-quantile form ($\tilde{G}_{l,k}$), is given by :
\begin{equation}
    \tilde{G}_{l,k}= \left\{
        \begin{array}{lr} 
          -k_{tr}+\psi_1(k+k_{tr})        &  k\in K_1 \\
         k                             &  k\in K_2 \\
          k_{tr}+\psi_2(k-k_{tr})        &  k\in K_3
        \end{array},
    \right.
\end{equation}
where $\psi_1=\frac{\tilde{x}_{min}+k_{tr}}{k_{min}+k_{tr}}$ and $\psi_2=\frac{\tilde{x}_{max}-k_{tr}}{k_{max}-k_{tr}}$. Due to the model's symmetry, the right bounds ($\tilde{r}_k$) and their associated Gaussian-quantiles ($G_{r,k}$) are simply a reflection of the left bounds around the origin. For a detailed explanation on NavDEN model, one may refer to the original work in \citep{rife_overbounding_2012}.
}

\section*{Acknowledgments}
This research was supported by the project H-ZGD2 \enquote{Huawei-PolyU High-Precision Positioning with Vehicle-Mounted Multi-Sensor Fusion in Complex Scenarios (Phase II)}, the project 1-BBDW \enquote{Multi-robot Collaborative Operations in Lunar Areas for Regolith Processing}, and the Innovation and Technology Fund - Innovation and Technology Support Programme (ITF-ITSP) of the Innovation and Technology Commission, under Grant No. ITS/091/23, \enquote{Improving the Smartphone Urban Positioning Accuracy aided by AI and Probabilistic Graphical Model based Multi-sensory Integration for Hong Kong Smart Cities.}

\section*{Conflict of Interest}
The authors declare no conflicts of interest.

\printbibliography

\end{document}